\title{Foundation for a Series of Efficient Simulation Algorithms}
\author{G\'erard C\'ec\'e}
\affil{\small{}
FEMTO-ST Insitute/CNRS,
Univ. Bourgogne Franche-Comt\'e, Montb\'eliard  France\\    
\texttt{Gerard.Cece@femto-st.fr}}
\newcommand{\Figure}{Figure~}
\newcommand{\Olics}[1]{}
\newcommand{\OarXiv}[1]{#1}
   \date{}
\newcommand{\bline}{\vspace{1\baselineskip}} 
\newcommand{\suchthat}{\,\big|\,}
\theoremstyle{plain}
\newtheorem{definition}{Definition}
\newtheorem{theorem}[definition]{Theorem}
\newtheorem{lemma}[definition]{Lemma}
\newtheorem{proposition}[definition]{Proposition}
\theoremstyle{remark}
\newtheorem*{remark}{\textbf{Remark}}
\DeclareMathOperator{\initRefine}{InitRefine}
\DeclareSymbolFont{symbolsb}{OMS}{cmsy}{m}{n}
\DeclareSymbolFontAlphabet{\mathcal}{symbolsb}
\tikzset{
  maxtr/.style={near end,auto,swap,,circle,inner sep=1pt,minimum width=0,font=\tiny},
  dotstate/.style={inner sep=0,minimum width=2pt,fill,circle},
}
\tikzset{
  split/.initial=2pt, 
  up label/.initial=, 
  down label/.initial=, 
  nosplit/.is if=nosplit,
%
 state end/.tip = {Circle[sep=1pt,length=0pt] . Circle[sep=-1pt,length=2pt]},
  pics/splitnode/.style={
    code={
      \pgfkeys{/tikz/split/.get=\distance, 
        }

\ifnosplit
    \xdef\distance{0}
    \tikzset{drawNode/.style={pic actions}, drawSemi/.style={draw=none,fill=none}}
\else
    \tikzset{drawNode/.style={draw=none,fill=none},        drawSemi/.style=pic actions}
\fi

\node[circle split,transparent] (-Origine) 
       {\pgfkeys{/tikz/up label}\nodepart{lower}\pgfkeys{/tikz/down label}} 
       let \p{w}=(-Origine.west), \p{e}=(-Origine.east),  \n{h}={(\x{e} - \x{w})} in 
           \pgfextra{\xdef\diam{\n{h}}};

\node [drawSemi,above=\distance, minimum width=\diam,anchor=south,semicircle]   (-Up)  
           {\pgfkeys{/tikz/up label}};
\node [drawSemi,above=-\distance, minimum width=\diam,anchor=north,semicircle, 
            shape border rotate=180] (-Down)
           {\pgfkeys{/tikz/down label}};   

     \node [drawNode,circle,anchor=center,minimum width=\diam,inner sep=0] (-Origine) {};
    }
 }
}
\begin{document}

\maketitle


\begin{abstract}
  Compute the coarsest simulation preorder included in an initial preorder is
  used to reduce the resources needed to analyze a given transition system. This
  technique is applied on many models like Kripke structures, labeled graphs,
  labeled transition systems or even word and tree automata. Let
  $(Q,\rightarrow)$ be a given transition system and
  $\mathscr{R}_{\mathrm{init}}$ be an initial preorder over $Q$.  Until now,
  algorithms to compute $\mathscr{R}_{\mathrm{sim}}$, the coarsest simulation
  included in $\mathscr{R}_{\mathrm{init}}$, are either memory efficient or
  time efficient but not both. In this paper we propose the foundation for a
  series of efficient simulation algorithms with the introduction of the notion
  of maximal transitions and the notion of stability of a preorder with
  respect to a coarser one.  As an illustration we solve an open problem by
  providing the first algorithm with the best published time complexity,
  $O(|P_{\mathrm{sim}}|.|{\rightarrow}|)$, and a bit space complexity in
  $O(|P_{\mathrm{sim}}|^2.\log(|P_{\mathrm{sim}}|) +|Q|.\log(|Q|))$, with
  $P_{\mathrm{sim}}$ the partition induced by $\mathscr{R}_{\mathrm{sim}}$.
\end{abstract}

\section{Introduction}
\label{sec:introduction}

The simulation relation has been introduced by
Milner \cite{Milner71} as a behavioural relation between process.  This relation
can also be used to speed up the test of inclusion of languages \cite{ACHMV10}
or as a sufficient condition when this test of inclusion is undecidable in
general \cite{CG11}. Another very helpful use of a simulation relation is to
exhibit an equivalence relation over the states of a system. This allows to
reduce the state space of the given system to be analyzed while preserving an
important part of its properties, expressed in temporal logics for examples
\cite{GL94}. Note that the simulation equivalence yields a better reduction of
the state space than the better known bisimulation equivalence.

\subsection{State of the Art}
\label{sec:state-art}

The paper that has most influenced the literature is that of Henzinger,
Henzinger and Kopke \cite{HHK95}. Their algorithm, designed over Kripke
structures, and here named HHK, to compute $\mathscr{R}_{\mathrm{sim}}$, the
coarsest simulation, runs in $O(|Q|.|{\rightarrow}|)$-time, with $\rightarrow$
the transition relation over the state space $Q$, and uses $O(|Q|^2.\log(|Q|))$
bits.

But it happens that $\mathscr{R}_{\mathrm{sim}}$ is a preorder. And as such, it
can be more efficiently represented by a partition-relation pair $(P,R)$ with
$P$ a partition, of the state space $Q$, whose blocks are classes of the simulation equivalence relation
and with $R\subseteq P\times P$ a preorder over the blocks of $P$.  Bustan and
Grumberg \cite{BG03} used this to propose an algorithm, here named BG, with an
optimal bit-space complexity in $O(|P_{\mathrm{sim}}|^2 +
|Q|.\log(|P_{\mathrm{sim}}|))$ with $|P_{\mathrm{sim}}|$ (in general significantly
smaller than $|Q|$) the number of blocks of the partition $P_{\mathrm{sim}}$
associated with $\mathscr{R}_{\mathrm{sim}}$. Unfortunately, BG suffers from a
very bad time complexity. Then, Gentilini, Piazza and Policriti \cite{GPP03}
proposed an algorithm, here named GPP, with a better time complexity, in
$O(|P_{\mathrm{sim}}|^2.|{\rightarrow}|)$, and a claimed bit space complexity
like the one of BG.  This algorithm had a mistake and was corrected in
\cite{GP08}. It is very surprising that none of the authors citing \cite{GPP03},
including these of \cite{GP08,RT07,RT10,CRT11} and \cite{GPP15}, realized that the
announced bit space complexity was also not correct. Indeed, as shown in
\cite{Cec13a} and \cite{Ran14} the real bit space complexity of GPP is
$O(|P_{\mathrm{sim}}|^2.\log(|P_{\mathrm{sim}}|)+|Q|.\log(|Q|))$. In a similar way,
\cite{RT10} and \cite {CRT11} did a minor mistake by considering that a bit
space in $O(|Q|.\log(|P_{\mathrm{sim}}|))$ was sufficient to represent the
partition in their algorithms while a space in $O(|Q|.\log(|Q|))$ is needed.

Ranzato and Tapparo \cite{RT07,RT10} made a major breakthrough with their
algorithm, here named RT, which runs in
$O(|P_{\mathrm{sim}}|.|{\rightarrow}|)$-time but uses
$O(|P_{\mathrm{sim}}|.|Q|.\log(|Q|))$ bits, which is more than GPP. The
difficulty of the proofs and the abstract interpretation framework put aside, RT
is a reformulation of HHK but with a partition-relation pair instead of a mere
relation between states.  Over unlabelled transition systems, this is the best
algorithm regarding the time complexity.

\bline \emph{ Since \cite{RT07} a question has emerged: is there an algorithm
  with the time complexity of RT while preserving the space complexity of GPP ?
} \bline

Crafa, Ranzato and Tapparo \cite{CRT11}, modified RT to enhance its space
complexity. They proposed an algorithm with a time complexity in
$O(|P_{\mathrm{sim}}|.|\rightarrow|+|P_{\mathrm{sim}}|^2.|\rightarrow_{P_{\mathrm{sp}},P_{\mathrm{sim}}}|)$
and a bit space complexity in
$O(|P_{\mathrm{sp}}|.|P_{\mathrm{sim}}|.\log(|P_{\mathrm{sp}}|)+ |Q|.\log(|Q|))$
with $|P_{\mathrm{sp}}|$ between $|P_{\mathrm{sim}}|$ and $|P_{\mathrm{bis}}|$,
the number of bisimulation classes, and
$\rightarrow_{P_{\mathrm{sp}},P_{\mathrm{sim}}}$ a smaller abstraction of
$\rightarrow$. Unfortunately (although this algorithm provided new insights),
for 22 examples, out of the 24 they provided, there is no difference between
$|P_{\mathrm{bis}}|$, $|P_{\mathrm{sp}}|$ and $|P_{\mathrm{sim}}|$. For the two
remaining examples the difference is marginal.  With a little provocation, we
can then consider that $|P_{\mathrm{sp}}|\approx|P_{\mathrm{bis}}|$ and compute
the bisimulation equivalence (what should be done every time as it produces a
considerable speedup) then compute the simulation equivalence with GPP on the
obtained system is a better solution than the algorithm in \cite{CRT11} even if
an efficient computation of the bisimulation equivalence requires, see
\cite{PT87}, a bit space in $O(|\rightarrow|.\log(|Q|))$.

Ranzato \cite{Ran14} almost achieved the challenge by announcing an algorithm
with the space complexity of GPP but with the time complexity of RT multiplied
by a $\log(|Q|)$ factor. He concluded that the suppression of this $\log(|Q|)$
factor seemed to him quite hard to achieve. 
Gentilini, Piazza and Policriti \cite{GPP15} outperformed the
challenge by providing an algorithm with the space complexity of BG and 
the time complexity of RT,  but only in the special case of acyclic transition 
systems.

\subsection{Our Contributions}
\label{sec:our-contributions}
In this paper, we respond positively to the question and propose the
first simulation algorithm with the time complexity of RT and the space
complexity of GPP.

Our main sources of inspiration are \cite{PT87} for its implicit notion of
stability against a coarser partition, that we generalize in the present paper
for preorders, and for the counters it uses, \cite{HHK95} for the extension of
these counters for simulation algorithms,
\cite{BG03} for its use of little
brothers to which we prefer the use of what we define as maximal transitions,
\cite{Ran14} for its implicit use of maximal transitions to split blocks and
for keeping as preorders the intermediate relations of its algorithm and
\cite{Cec13a} for its equivalent definition of a simulation in terms of
compositions of relations.
 
 Note that almost all simulation algorithms are defined for Kripke
 structures. However, in each of them, after an initial step which consists in
 the construction of an initial preorder $\mathscr{R}_{\mathrm{init}}$, the
 algorithm is equivalent to calculating the coarsest simulation inside
 $\mathscr{R}_{\mathrm{init}}$ over a classical transition system.  We therefore directly start from a transition
 system $(Q,\rightarrow)$ and an initial preorder $\mathscr{R}_{\mathrm{init}}$
 inside which we compute the coarsest simulation.

 \Olics{
 \begin{remark}
   The proofs which are not in the main text may be found in \cite{Cec17a}.
 \end{remark}
}


\section{Preliminaries}
\label{sec:preliminaries}

Let $Q$ be a set of elements, or \emph{states}. The number of elements of $Q$ is
denoted $|Q|$. A \emph{relation} over $Q$ is a subset of $Q\times Q$. Let
$\mathscr{R}$ be a relation over $Q$.  For all $q,q'\in Q$ we may write
$q\,\mathscr{R}\,q'$, or $q \mathbin{\tikz[baseline] \draw[dashed,->] (0pt,.5ex)
  -- node[font=\footnotesize,fill=white,inner sep=2pt] {$\mathscr{R}$}
  (6ex,.5ex);} q'$ in the figures, when $(q,q')\in\mathscr{R}$.  We define
$\mathscr{R}(q)\triangleq\{q'\in Q\suchthat q\,\mathscr{R}\,q'\}$ for $q\in Q$,
and $\mathscr{R}(X)\triangleq\cup_{q\in X}\mathscr{R}(q)$ for $X\subseteq Q$. We
write $X\mathrel\mathscr{R}Y$, or $X \mathbin{\tikz[baseline] \draw[dashed,->]
  (0pt,.7ex) -- node[font=\footnotesize,fill=white,inner sep=2pt]
  {$\mathscr{R}$} (6ex,.7ex);} Y$ in the figures, when $X\times
Y\cap\,\mathscr{R}\neq\emptyset$. For $q\in Q$ and $X\subseteq Q$, we also write
$X\mathrel\mathscr{R} q$ (resp.  $q\mathrel\mathscr{R} X$) for
$X\mathrel\mathscr{R} \{q\}$ (resp.  $\{q\}\mathrel\mathscr{R} X$).  A relation
$\mathscr{R}$ is said \emph{coarser} than another relation $\mathscr{R}'$ when
$\mathscr{R}'\subseteq\mathscr{R}$.  The \emph{inverse} of $\mathscr{R}$ is
${\mathscr{R}}^{-1}\triangleq\{(y,x)\in Q\times Q\suchthat (x,y)\in
\mathscr{R}\}$. The relation $\mathscr{R}$ is said \emph{symmetric} if
${\mathscr{R}}^{-1}\subseteq \mathscr{R}$ and \emph{antisymmetric} if
$q\,\mathscr{R}\,q'$ and $q'\,\mathscr{R}\,q$ implies $q=q'$.  Let $\mathscr{S}$
be a second relation over $Q$, the \emph{composition} of $\mathscr{R}$ by
$\mathscr{S}$ is $\mathscr{S}\mathrel{\circ}\mathscr{R}\triangleq\{(x,y)\in
Q\times Q\suchthat y\in\mathscr{S}(\mathscr{R}(x))\}$. The relation
$\mathscr{R}$ is said \emph{reflexive} if for all $q\in Q$ we have
$q\,\mathscr{R}\,q$, and \emph{transitive} if
$\mathscr{R}\mathrel{\circ}\mathscr{R}\subseteq\mathscr{R}$.  A \emph{preorder}
is a reflexive and transitive relation.  A \emph{partition} $P$ of $Q$ is a set
of non empty subsets of $Q$, called \emph{blocks}, that are pairwise disjoint
and whose union gives $Q$. A \emph{partition-relation pair} is a pair $(P,R)$
with $P$ a partition and $R$ a relation over $P$. To a partition-relation pair
$(P,R)$ we associate a relation $\mathscr{R}_{(P,R)}\triangleq \bigcup_{(C,D)\in
  R}C\times D$.  Let $\mathscr{R}$ be a preorder on $Q$ and $q\in Q$, we define
$[q]_\mathscr{R} \triangleq \{q'\in Q \suchthat q \,\mathscr{R}\, q' \wedge q'
\,\mathscr{R}\, q\}$ and $P_{\mathscr{R}} \triangleq \{[q]_\mathscr{R}\subseteq
Q\suchthat q\in Q\}$. It is easy to show that $P_{\mathscr{R}}$ is a partition
of $Q$. Therefore, given any preorder $\mathscr{R}$ and a state $q\in Q$, we
also call \emph{block}, the \emph{block of} $q$, the set $[q]_\mathscr{R}$.  A
symmetric preorder $\mathscr{P}$ is totally represented by the partition
$P_{\mathscr{P}}$ since $\mathscr{P}=\cup_{E\in P_\mathscr{P}} E\times
E$. Let us recall that a symmetric preorder is traditionally named an \emph{equivalence relation}.  Conversely, given a partition $P$, there is an associated equivalence relation
$\mathscr{P}_P\triangleq \cup_{E\in P}E\times E$. In the general case, a
preorder $\mathscr{R}$ is efficiently represented by the partition-relation pair
$(P_{\mathscr{R}}, R_{\mathscr{R}})$ with
$R_{\mathscr{R}}\triangleq\{([q]_\mathscr{R},[q']_\mathscr{R})\in
P_{\mathscr{R}}\times P_{\mathscr{R}}\suchthat q\,\mathscr{R}\,q'\}$ a
reflexive, transitive and antisymmetric relation over
$P_{\mathscr{R}}$. Furthermore, for a preorder $\mathscr{R}$, we note
$[\cdot]_{\mathscr{R}}$ the relation over $Q$ which associates to a state the
elements of its block. Said otherwise:
$[\cdot]_{\mathscr{R}}\triangleq\cup_{q\in
  Q}\{q\}\times[q]_{\mathscr{R}}$. Finally, for a set $X$ of sets we note $\cup
X$ for $\cup_{E\in X}E$.

\begin{proposition}
  \label{prop:RexistForAll}
  Let $X$ and $Y$ be two blocks of a preorder $\mathscr{R}$. Then
  \begin{displaymath}
    (X'\subseteq X\wedge Y'\subseteq Y\wedge X'\mathrel\mathscr{R}Y')\Rightarrow
    X\times Y\subseteq \mathscr{R}.
  \end{displaymath}
  Said otherwise, when two subsets of two blocks of $\mathscr{R}$ are related by
  $\mathscr{R}$ then all the elements of the first block are related by
  $\mathscr{R}$ with all the elements of the second block.  
\end{proposition}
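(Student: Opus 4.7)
The plan is to unfold the definition of \emph{block} (equivalence class of the symmetric part of $\mathscr{R}$) and then stitch together three uses of transitivity.

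First, I would fix arbitrary $x\in X$ and $y\in Y$ and aim to show $x\,\mathscr{R}\,y$. The hypothesis $X'\mathrel\mathscr{R} Y'$ unpacks, by the definition just given in the preliminaries, to the existence of some $x'\in X'\subseteq X$ and some $y'\in Y'\subseteq Y$ with $x'\,\mathscr{R}\,y'$. Since $X$ is a block of $\mathscr{R}$, meaning $X=[x]_{\mathscr{R}}=[x']_{\mathscr{R}}$, the definition $[q]_{\mathscr{R}}=\{q''\suchthat q\,\mathscr{R}\,q''\wedge q''\,\mathscr{R}\,q\}$ gives $x\,\mathscr{R}\,x'$. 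Symmetrically, $Y$ being a block yields $y'\,\mathscr{R}\,y$.

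Then transitivity of $\mathscr{R}$ applied twice — first to $x\,\mathscr{R}\,x'$ and $x'\,\mathscr{R}\,y'$, then to the resulting $x\,\mathscr{R}\,y'$ and $y'\,\mathscr{R}\,y$ — yields $x\,\mathscr{R}\,y$. Since $x\in X$ and $y\in Y$ were arbitrary, this gives $X\times Y\subseteq\mathscr{R}$.

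There is no real obstacle: the statement is essentially the observation that $\mathscr{R}$ is constant on pairs of blocks, which follows purely from reflexivity/transitivity plus the symmetric‐closure definition of a block. The only thing to be careful about is to correctly extract, from the figure‐style notation $X'\mathrel\mathscr{R}Y'$, the existential $\exists x'\in X',\exists y'\in Y',\, x'\,\mathscr{R}\,y'$ (rather than the universal reading), since it is this witness that is then sandwiched between $x$ and $y$ via the two block‐membership relations.
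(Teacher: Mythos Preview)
Your proof is correct and is precisely the argument the paper has in mind: the paper's own proof reads, in its entirety, ``Thanks to the transitivity of $\mathscr{R}$,'' and you have simply written out what that one line means.
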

\begin{proof}
  Thanks to the transitivity of  $\mathscr{R}$.
\end{proof}

A \emph{finite transition systems (TS)} is a pair $(Q,\rightarrow)$ with $Q$ a
finite set of states, and $\rightarrow$ a
relation over $Q$ called the \emph{transition relation}. A relation
$\mathscr{S}$ is a \emph{simulation} over 
$(Q,\rightarrow)$ if: 
\begin{equation}
  \label{eq:simulation}
  \mathscr{S}\mathrel\circ\rightarrow^{-1}\,\subseteq\,
  \rightarrow^{-1}\mathrel\circ \mathscr{S}
\end{equation}
For a simulation $\mathscr{S}$, when we have $q\mathrel\mathscr{S}q'$, we say
that $q$ is \emph{simulated} by $q'$ (or $q'$ \emph{simulates} $q$).

A relation $\mathscr{B}$ is a \emph{bisimulation} if $\mathscr{B}$ and
$\mathscr{B}^{-1}$ are both simulations. The interesting bisimulations, such as
the coarsest one included in a preorder, are equivalence relations. It is easy to
show that an equivalence relation $\mathscr{B}$ is a bisimulation iff :
\begin{equation}
 \mathscr{B}\mathrel\circ\rightarrow^{-1}\,\subseteq\,
  \rightarrow^{-1}\mathrel\circ \mathscr{B}
\end{equation}

\begin{remark}
  The classical definition is to say that a relation $\mathscr{S}$ is a
  simulation if: $q_1\mathrel\mathscr{S} q_2\wedge q_1\rightarrow q'_1
  \Rightarrow \exists q'_2\;.\;q_2\rightarrow q'_2\wedge
  {q'_1\mathrel\mathscr{S} q'_2} $. However, we prefer the formula
  \eqref{eq:simulation}, which is equivalent, because it is more global and to
  design efficient simulation algorithms we must abstract from individual states.
\end{remark}

In the remainder of the paper, all relations are over the same finite set $Q$
and the underlying transition system is $(Q,\rightarrow)$.

\section{Key Ideas}
\label{sec:ideas}

Let us start from \textbf{equation \eqref{eq:simulation}.} If a relation
$\mathscr{R}$ is not a simulation, we have
$\mathscr{R}\mathrel\circ{\rightarrow^{-1}}\nsubseteq\rightarrow^{-1}\mathrel{\circ}\mathscr{R}$. This
implies the existence of a relation $Remove$ such that:
$\mathscr{R}\mathrel\circ\,\rightarrow^{-1}\subseteq{(\rightarrow^{-1}\mathrel{\circ}\mathscr{R)}\cup
  Remove}$.  It can be shown that most of the simulation algorithms cited in the
introduction, like HHK, GPP and RT, are based on this last equation. In this
paper, like in \cite{Cec13a}, we make the following different choice. When
$\mathscr{R}$ is not a simulation, we reduce the problem of finding the coarsest
simulation inside $\mathscr{R}$ to the case where there is a relation $\mathit{NotRel}$
such that:
$\mathscr{R}\mathrel\circ\rightarrow^{-1}\,\subseteq\,\rightarrow^{-1}\mathrel\circ(\mathscr{R}\cup
\mathit{NotRel})$.  Let us note $\mathscr{U}\triangleq \mathscr{R}\cup \mathit{NotRel}$. We will
say that $\mathscr{R}$ is\textbf{ $\mathscr{U}$-stable} since we have:

\begin{equation}
  \label{eq:keyIdeaStable}
  \mathscr{R}\mathrel\circ\rightarrow^{-1}\,\subseteq\,
\rightarrow^{-1}\mathrel\circ \mathscr{U}
\end{equation}
Our definition of stability is new. However, it is implicit in the bisimulation
algorithm of \cite[p. 979]{PT87} where, with the notations from \cite{PT87}, a
partition $Q$ is said stable with every block of a coarser partition $X$. Within
 our formalism we can say the same thing with the formula $
\mathscr{P}_{Q}\mathrel\circ\rightarrow^{-1}\,\subseteq\,
\rightarrow^{-1}\mathrel\circ \mathscr{P}_{X}$.

\begin{figure}[!t]
  \centering
  \includegraphics[scale=0.7,page=1]{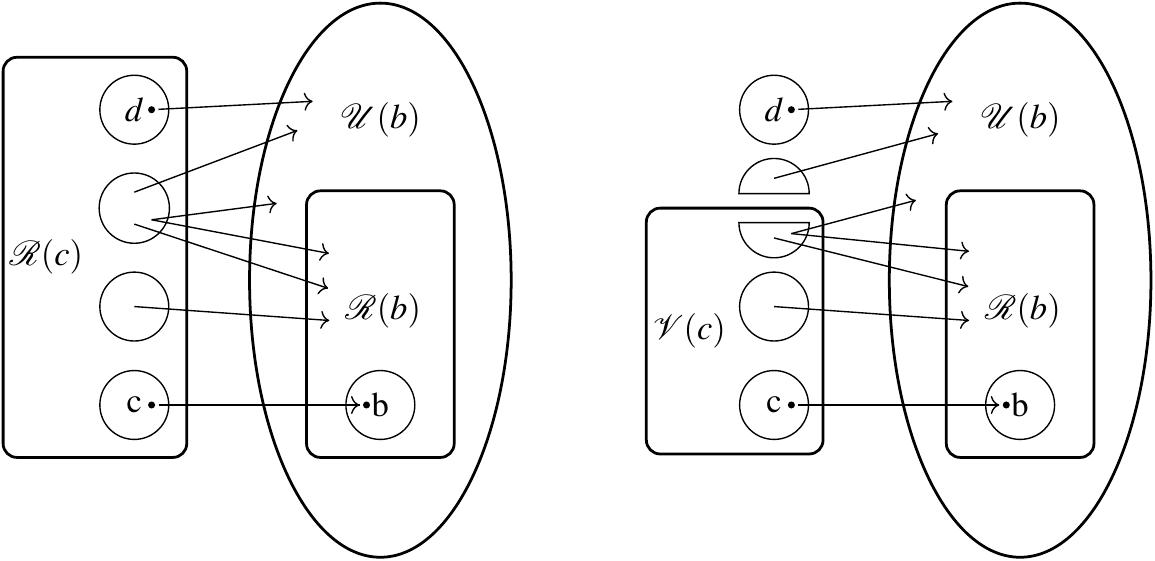}
 
  \caption{ $\mathscr{R}$ is $\mathscr{U}$-stable and $\mathscr{V}$, obtained
    after a split of blocks of $\mathscr{R}$  and a
    refinement of $\mathscr{R}$, is $\mathscr{R}$-stable.}
  \label{fig:intuiSim}
\end{figure}
 
Consider the transition $c\rightarrow b$ in \Figure\ref{fig:intuiSim}. The
preorder $\mathscr{R}$ is assumed to be $\mathscr{U}$-stable and we want to find
the coarsest simulation included in $\mathscr{R}$. Since $\mathscr{R}$ is a
preorder, the set $\mathscr{R}(c)$ is a union of blocks of $\mathscr{R}$. A
state $d$ in $\mathscr{R}(c)$ which doesn't have an outgoing transition to
$\mathscr{R}(b)$ belongs to $\rightarrow^{-1}\mathrel\circ\mathscr{U}(b)$,
thanks to \eqref{eq:keyIdeaStable}, but cannot simulate $c$. Thus, we can safely
remove it from $\mathscr{R}(c)$. But to do this effectively, we want to manage
blocks of states and not the individual states. Hence, we first do a
\textbf{split step} by splitting the blocks of $\mathscr{R}$ such that a
resulting block, included in both $\mathscr{R}(c)$ and
$\rightarrow^{-1}\mathrel\circ\mathscr{U}(b)$, is either completely included in
$\rightarrow^{-1}\mathrel\circ\mathscr{R}(b)$, which means that its elements
still have a chance to simulate $c$, or totally outside of it, which means that its
elements cannot simulate $c$. Let
us call $\mathscr{P}$ the equivalence relation associated to the resulting
partition. We will say that $\mathscr{P}$ is\textbf{
  $\mathscr{R}$-block-stable}. Then, to test whether a block, $E$ of
$\mathscr{P}$, which has an outgoing transition in
$\rightarrow^{-1}\mathrel\circ(\mathscr{U}\setminus\mathscr{R})(b)$, is included
in $\rightarrow^{-1}\mathrel\circ\mathscr{R}(b)$, it is sufficient to do the
test \textbf{for only one} of its elements, arbitrarily choosen, we call the
\textbf{representative} of $E$: $E.\Rep$. To do this test in constant time we
manage a counter which, at first, \textbf{count the number of transitions from
  $E.\Rep$ to $\mathscr{U}(b)=\mathscr{U}([b]_{\mathscr{P}})$}. By scanning the
transitions whose destination belongs to $(\mathscr{U}\setminus\mathscr{R})(b)$
this counter is updated to count the transitions from $E.\Rep$ to
$\mathscr{R}(b)=\mathscr{R}([b]_{\mathscr{P}})$. Therefore we get the
equivalences: there is no transition from $E$ to $\mathscr{R}(b)$ iff there is
no transition from $E.\Rep$ to $\mathscr{R}(b)$ iff this counter is null. Remark
that the total bit size of all the counters is in
$O(|P_{\mathrm{sim}}|^2.\log(|Q|))$ since there is at most $|P_{\mathrm{sim}}|$
blocks like $E$, $|P_{\mathrm{sim}}|$ blocks like $[b]_{\mathscr{P}}$ and $|Q|$
transitions from a state like $E.\Rep$. The difference is not 
so significative in practice but we will reduce this size to
$O(|P_{\mathrm{sim}}|^2.\log(|P_{\mathrm{sim}}|))$, at a cost of
$O(|P_{\mathrm{sim}}|.|\rightarrow|)$ elementary steps, which is hopefully
within our time budget.  Removing from $\mathscr{R}(c)$ the blocks of
$\rightarrow^{-1}\mathrel\circ(\mathscr{U}\setminus\mathscr{R})(b)$, like
$[d]_{\mathscr{P}}$, which do not have an outgoing transition to
$\mathscr{R}(b)$ is called the \textbf{refine step}. After this refine step,
$\mathscr{R}(c)$ has been reduced to $\mathscr{V}(c)$. Doing these split and
refine steps for all transitions $c\rightarrow b$ results in the relation
$\mathscr{V}$ that we will prove to be a $\mathscr{R}$-stable preorder.

In summary, from an initial preorder we will build a strictly decreasing series
of preorders $(\mathscr{R}_i)_{i\geq 0}$ such that $\mathscr{R}_{i+1}$ is
$\mathscr{R}_i$-stable and contains, by construction, all simulations included in
$\mathscr{R}_i$. Since all the relations are finite, this series has a limit,
reached in a finite number of steps. Let us call $\mathscr{R}_{\mathrm{sim}}$
this limit. We have: $\mathscr{R}_{\mathrm{sim}}$ is
$\mathscr{R}_{\mathrm{sim}}$-stable. Therefore, with \eqref{eq:simulation} and
\eqref{eq:keyIdeaStable}, $\mathscr{R}_{\mathrm{sim}}$ is a simulation and by
construction contains all simulations included in the initial preorder: this is
the coarsest one.

\begin{remark}
  The counters which are used in the previous paragraphs play a similar role as 
  the counters used in \cite{PT87}. Without them, the time complexity of the
  algorithm of the present paper would have been multiplied by a
  $|P_{\mathrm{sim}}|$ factor and  would have been this of GPP:
  $O(|P_{\mathrm{sim}}|^2.|{\rightarrow}|)$. 
\end{remark}

\section{Underlying Theory}
\label{sec:underlyingTh}

In this section we give the necessary theory to define what should be the ideal
split step and we justify the correctness of our refine step which allows to
treat blocks as if they were single states. We begin by introducing the notion
of maximal transition. This is the equivalent concept for transitions from that
of little brothers, introduced in \cite{BG03}, for states. The main difference
is that little brothers have been defined relatively to the final coarsest
simulation in a Kripke structure. Here we define maximal transitions
relatively to a current preorder $\mathscr{R}$.  

\begin{definition}
  \label{def:maxTrans}
  Let $\mathscr{R}$ be a preorder. The transition $q\rightarrow q'$ is said
  \emph{maximal for}  $\mathscr{R}$, or
  \emph{$\mathscr{R}$-maximal}, which is noted $q\rightarrow_{\mathscr{R}}q'$,
  when:
  \begin{displaymath}
  \forall q''\in Q \;.\;
  (q\rightarrow q'' \wedge q'\mathrel\mathscr{R}q'')\Rightarrow
  q''\in[q']_{\mathscr{R}}
\end{displaymath}
  The set of $\mathscr{R}$-maximal transitions and the induced relation are both noted
  $\rightarrow_{\mathscr{R}}$.
\end{definition}
\begin{figure}[!t]
  \centering    
  \includegraphics[scale=1]{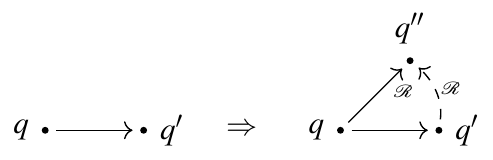}

  \caption{Illustration of the left property of Lemma~\ref{lem:transInMaxoR}.}
  \label{fig:transInMaxoR}
\end{figure}
\begin{lemma}[\Figure~\ref{fig:transInMaxoR}]
  \label{lem:transInMaxoR} 
  For a preorder $\mathscr{R}$, the two
  following properties are verified:  
  \begin{displaymath}
   \rightarrow^{-1}\,\subseteq\,
   \rightarrow_{\mathscr{R}}^{-1}\mathrel{\circ}\mathscr{R} \text{ and }
       \rightarrow^{-1}\mathrel{\circ}\mathscr{R} \,=\, \rightarrow^{-1}_{\mathscr{R}}\mathrel{\circ}\mathscr{R}
\end{displaymath} 
\end{lemma}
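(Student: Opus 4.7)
The plan is to prove the two inclusions in sequence, with the first doing the real work and the second following by a short composition argument. For the first inclusion, I would fix a transition $q \rightarrow q'$ and show the existence of a block-maximal target. Define $S \triangleq \{q''\in Q \suchthat q\rightarrow q'' \wedge q'\mathrel\mathscr{R}q''\}$; by reflexivity of $\mathscr{R}$ we have $q'\in S$, so $S\neq\emptyset$. Since $Q$ is finite, the set $\{[q'']_{\mathscr{R}} \suchthat q''\in S\}$ of $\mathscr{R}$-blocks hitting $S$ is finite, and it is partially ordered by the antisymmetric quotient of $\mathscr{R}$ on blocks (this is exactly the relation $R_{\mathscr{R}}$ from the preliminaries). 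Pick any $z\in S$ whose block $[z]_{\mathscr{R}}$ is maximal for this order.

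The main verification is then to check that $q\rightarrow_{\mathscr{R}} z$ using Definition~\ref{def:maxTrans}. Suppose $q\rightarrow q''$ and $z\mathrel\mathscr{R}q''$. By transitivity of $\mathscr{R}$ together with $q'\mathrel\mathscr{R}z$ (which holds because $z\in S$), we obtain $q'\mathrel\mathscr{R}q''$, so $q''\in S$. Now $z\mathrel\mathscr{R}q''$ says $[z]_{\mathscr{R}} \leq [q'']_{\mathscr{R}}$ in the block order; by the chosen maximality of $[z]_{\mathscr{R}}$ inside the blocks hitting $S$, the two blocks coincide, i.e.\ $q''\in[z]_{\mathscr{R}}$. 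This gives both $q\rightarrow_{\mathscr{R}}z$ and $q'\mathrel\mathscr{R}z$, which is exactly the statement that $(q',q)\in \rightarrow_{\mathscr{R}}^{-1}\mathrel\circ\mathscr{R}$.

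For the second property, the inclusion $\rightarrow_{\mathscr{R}}^{-1}\mathrel\circ\mathscr{R} \subseteq \rightarrow^{-1}\mathrel\circ\mathscr{R}$ is immediate from $\rightarrow_{\mathscr{R}}\,\subseteq\,\rightarrow$. Conversely, assume $(x,y)\in \rightarrow^{-1}\mathrel\circ\mathscr{R}$, so there is some $z$ with $x\mathrel\mathscr{R}z$ and $y\rightarrow z$. Apply the first property to $y\rightarrow z$ to get $w\in Q$ with $z\mathrel\mathscr{R}w$ and $y\rightarrow_{\mathscr{R}}w$. Transitivity of $\mathscr{R}$ then yields $x\mathrel\mathscr{R}w$, so $(x,y)\in \rightarrow_{\mathscr{R}}^{-1}\mathrel\circ\mathscr{R}$.

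The only delicate point is the choice of $z$ in the first part: one must work with blocks (i.e.\ with the antisymmetric quotient) rather than with states, since $\mathscr{R}$-chains can cycle within a single block and so the states of $S$ carry only a preorder. Once one passes to the block order the argument is just "finite poset has a maximal element above every point", and the rest is routine chasing of transitivity and reflexivity.
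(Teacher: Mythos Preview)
Your proof is correct and follows essentially the same approach as the paper: define the set of $\mathscr{R}$-successors of $q'$ reached from $q$, pass to blocks, pick a maximal block, and verify maximality of the corresponding transition; then derive the second equality by composing with $\mathscr{R}$ and using transitivity. The only cosmetic differences are that you spell out the maximality verification explicitly (the paper leaves it as ``from what precedes'') and that you unfold the second part element-wise rather than writing it as a one-line relational composition.
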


\begin{proof}
  Let $(q,q')\in\rightarrow$ and $X=\{q''\in Q\suchthat q\rightarrow q''\wedge
  q'\mathscr{R}q''\}$.  Since $\mathscr{R}$ is reflexive, this set is not empty
  because it contains $q'$. Let $Y$ be the set of blocks of $\mathscr{R}$ which
  contain an element from $X$. Since this set is finite (there is a finite
  number of blocks) there is at least a block $G$ maximal in $Y$. Said
  otherwise, there is no $G'\in Y$, different from $G$, such that $G
  \,\mathscr{R}\, G'$. Let $q''\in G$ such that $q\rightarrow q''$. From what
  precedes, the transition $(q,q'')$ is maximal and $q'\mathscr{R}q''$. Hence:
  $(q',q)\in\rightarrow_{\mathscr{R}}^{-1}\mathrel{\circ}\mathscr{R}$. So we have:
  \begin{equation}
    \label{eq:transInMaxoR}
    \rightarrow^{-1} \,\subseteq\, \rightarrow^{-1}_{\mathscr{R}}\mathrel{\circ}\mathscr{R}
  \end{equation}

Now, from \eqref{eq:transInMaxoR} we get $
  \rightarrow^{-1}\mathrel{\circ}\mathscr{R} \subseteq
  \rightarrow^{-1}_{\mathscr{R}}\mathrel{\circ}\mathscr{R}\mathrel{\circ}\mathscr{R}$
  and thus  $
  \rightarrow^{-1}\mathrel{\circ}\mathscr{R} \subseteq
  \rightarrow^{-1}_{\mathscr{R}}\mathrel{\circ}\mathscr{R}$ since $\mathscr{R}$
  is a preorder. The relation $\rightarrow_{\mathscr{R}}$ is a subset of
  $\rightarrow$. Therefore we also have
  $\rightarrow^{-1}_{\mathscr{R}}\mathrel{\circ}\mathscr{R}\subseteq
  \rightarrow^{-1}\mathrel{\circ}\mathscr{R}$ which concludes the proof. 
  
\end{proof}

In the last section, we introduced the notions of stability and of
block-stability. Let us define them formaly.

\begin{definition}
  \label{def:Stability}
  
  Let $\mathscr{R}$ a preorder.
 
  \begin{itemize}

  \item $\mathscr{R}$ is said
    $\mathscr{U}$-\emph{stable}, with $\mathscr{U}$ a coarser preorder than
    $\mathscr{R}$,  if:
    \begin{equation}
      \label{def:relStability}
    \mathscr{R}\mathrel\circ\rightarrow^{-1}\,\subseteq\,
      \rightarrow^{-1}\mathrel\circ \mathscr{U}
    \end{equation}
    
  \item An equivalence relation $\mathscr{P}$ included in $\mathscr{R}$, is said
    \emph{$\mathscr{R}$-block-stable} if:
    \begin{multline}
      \label{eq:sameBlock}
      \OarXiv{\hfill}  \forall b,d,d'\in Q\,.\, d\,\mathscr{P}\,d'\Rightarrow \Olics{\\} (d\in  \rightarrow^{-1}\mathrel{\circ}\mathscr{R}(b) \Leftrightarrow
      d'\in  \rightarrow^{-1}\mathrel{\circ}\mathscr{R}(b)) \OarXiv{\hfill} 
    \end{multline}
  \end{itemize}
\end{definition}

\begin{remark} Say that $\mathscr{P}$ is included in $\mathscr{R}$ means that
  each block of $\mathscr{P}$ is included in a block of $\mathscr{R}$.
  
\end{remark}

As seen in the following lemma we have a nice equivalence: an equivalence
relation $\mathscr{P}$ is $\mathscr{R}$-block-stable iff it is
$\mathscr{R}$-stable.

\begin{lemma}
  \label{lem:partStability}
  Let $\mathscr{P}$ be an equivalence relation included in a preorder
  $\mathscr{R}$. Then \eqref{eq:sameBlock} is equivalent with: 
  \begin{equation}
    \label{eq:partStability}
    \mathscr{P}\mathrel\circ\rightarrow^{-1}\,\subseteq\,
    \rightarrow^{-1}\mathrel\circ  \mathscr{R}
  \end{equation}  
\end{lemma}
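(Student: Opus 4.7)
The plan is to prove the two implications separately after first unpacking the relational compositions into pointwise statements. Recall that $(x,y)\in\mathscr{S}\circ\mathscr{R}$ means that there is some $z$ with $x\mathrel{\mathscr{R}}z$ and $z\mathrel{\mathscr{S}}y$. Hence \eqref{eq:partStability} is the pointwise statement: for every $z\rightarrow q$ and $z\mathrel{\mathscr{P}}q'$ there exists $z'$ with $q\mathrel{\mathscr{R}}z'$ and $q'\rightarrow z'$; equivalently, whenever $z\mathrel{\mathscr{P}}q'$ and $z$ has a transition to $q$, the state $q'$ has a transition into $\mathscr{R}(q)$. In the same vein, the right-hand side of \eqref{eq:sameBlock} is just the statement that $\rightarrow^{-1}\!\circ\mathscr{R}(b)$ is saturated by $\mathscr{P}$-blocks.

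For the direction \eqref{eq:sameBlock}$\Rightarrow$\eqref{eq:partStability}, I would take an arbitrary pair in $\mathscr{P}\circ\rightarrow^{-1}$, unpack it as a triple $z\rightarrow q$, $z\mathrel{\mathscr{P}}q'$, and observe that reflexivity of $\mathscr{R}$ already yields $z\in\rightarrow^{-1}\!\circ\mathscr{R}(q)$. Then \eqref{eq:sameBlock} applied to $b:=q$, $d:=z$, $d':=q'$ immediately gives $q'\in\rightarrow^{-1}\!\circ\mathscr{R}(q)$, which is exactly $(q,q')\in\rightarrow^{-1}\!\circ\mathscr{R}$.

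For the converse \eqref{eq:partStability}$\Rightarrow$\eqref{eq:sameBlock}, fix $b,d,d'$ with $d\mathrel{\mathscr{P}}d'$ and, by symmetry of $\mathscr{P}$, it suffices to show the implication $d\in\rightarrow^{-1}\!\circ\mathscr{R}(b)\Rightarrow d'\in\rightarrow^{-1}\!\circ\mathscr{R}(b)$. From $d\in\rightarrow^{-1}\!\circ\mathscr{R}(b)$ pick $z'$ with $d\rightarrow z'$ and $b\mathrel{\mathscr{R}}z'$. Then $(z',d')\in\mathscr{P}\circ\rightarrow^{-1}$ via the witness $d$, so \eqref{eq:partStability} supplies $z''$ with $d'\rightarrow z''$ and $z'\mathrel{\mathscr{R}}z''$. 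The crucial step is to close the chain: transitivity of $\mathscr{R}$ applied to $b\mathrel{\mathscr{R}}z'\mathrel{\mathscr{R}}z''$ yields $b\mathrel{\mathscr{R}}z''$, and therefore $d'\in\rightarrow^{-1}\!\circ\mathscr{R}(b)$.

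The only subtle point — and the step I expect to be the main obstacle to write cleanly — is this last use of transitivity: \eqref{eq:partStability} only guarantees a successor $z''$ of $d'$ that is $\mathscr{R}$-above $z'$, not above $b$ itself, so one must explicitly invoke transitivity of $\mathscr{R}$ to bring the target back into $\mathscr{R}(b)$. The appeal to reflexivity in the first direction and to symmetry of $\mathscr{P}$ in the second are routine; the rest is just careful rewriting of compositions.
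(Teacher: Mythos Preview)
Your proof is correct and uses the same two ingredients as the paper's: reflexivity of $\mathscr{R}$ for the direction \eqref{eq:sameBlock}$\Rightarrow$\eqref{eq:partStability} and transitivity of $\mathscr{R}$ (together with symmetry of $\mathscr{P}$) for the converse. The only presentational difference is that the paper factors through the intermediate inclusion $\mathscr{P}\circ\rightarrow^{-1}\circ\mathscr{R}\subseteq\rightarrow^{-1}\circ\mathscr{R}$ and argues at the level of relation compositions, whereas you unpack everything pointwise; the mathematical content is the same.
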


\begin{proof}
  \begin{figure}[!t]
    \centering    
    \includegraphics[scale=1]{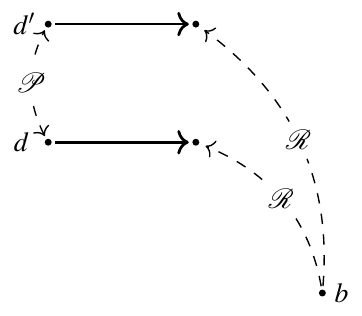}
    \caption{Illustration of \ref{lem:transInMaxoR}.}
    \label{fig:partStability}
  \end{figure}
  To show the equivalence of \eqref{eq:sameBlock} and \eqref{eq:partStability} we
  use an intermediate property:
  \begin{equation}
    \label{eq:intermediateSameBlock}
    \mathscr{P}\mathrel\circ\rightarrow^{-1}\mathrel\circ\mathscr{R}\,\subseteq\,
    \rightarrow^{-1}\mathrel\circ  \mathscr{R}
  \end{equation}  
  With the help of \Figure~\ref{fig:partStability} it is straightforward to see
  the equivalence of \eqref{eq:sameBlock} and \eqref{eq:intermediateSameBlock}.
  It remains therefore to show the equivalence of \eqref{eq:partStability} and
  \eqref{eq:intermediateSameBlock}.
  
  \begin{itemize}
  \item[] \eqref{eq:partStability} $\Rightarrow$
    \eqref{eq:intermediateSameBlock}. From \eqref{eq:partStability} we get
    $\mathscr{P}\mathrel\circ\rightarrow^{-1}\mathrel\circ \mathscr{R}\,\subseteq\,
    \rightarrow^{-1}\mathrel\circ \mathscr{R}\mathrel\circ \mathscr{R}$ and thus
    \eqref{eq:intermediateSameBlock} since, as a preorder $\mathscr{R}$ is
    transitive.         
    
  \item[] \eqref{eq:intermediateSameBlock} $\Rightarrow$
    \eqref{eq:partStability}.  Let $\mathscr{I}=\{(q,q)\suchthat q\in Q\}$ be
    the identity relation. We have $\mathscr{P}\mathrel\circ\rightarrow^{-1}=
    \mathscr{P}\mathrel\circ\rightarrow^{-1}\mathrel\circ\mathscr{I}$ and thus $
    \mathscr{P}\mathrel\circ\rightarrow^{-1}\mathrel\circ\mathscr{I}\,\subseteq\,
    \mathscr{P}\mathrel\circ\rightarrow^{-1}\mathrel\circ \mathscr{R}$ since
    $\mathscr{R}$ is a preorder and as such contains $\mathscr{I}$. With
    \eqref{eq:intermediateSameBlock} we thus get \eqref{eq:partStability}.        
  \end{itemize} 
\end{proof}

With \eqref{def:relStability} and \eqref{eq:partStability} the reader
should now be convinced by the interest of \eqref{eq:simulation} to define a
simulation.

Following the keys ideas given in Section~\ref{sec:ideas} there is an interest,
for the time complexity, of having a coarse $\mathscr{R}$-block-stable
equivalence relation $\mathscr{P}$. Hopefully there is a coarsest one.

\begin{proposition}
  Given a preorder $\mathscr{R}$, there is a coarsest
  $\mathscr{R}$-stable equivalence relation.
\end{proposition}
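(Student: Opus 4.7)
The plan is to construct the coarsest candidate explicitly rather than invoke a lattice-theoretic existence argument. Any $\mathscr{R}$-stable equivalence $\mathscr{P}$ must, by the very definition of $\mathscr{U}$-stability, be included in $\mathscr{R}$, and hence (by symmetry of $\mathscr{P}$) in $\mathscr{R}\cap\mathscr{R}^{-1}$; moreover, by Lemma~\ref{lem:partStability} its $\mathscr{R}$-stability is equivalent to the block-stability condition \eqref{eq:sameBlock}, i.e.\ related states must agree on which blocks of $\mathscr{R}$ they can reach in one transition. Reading these two necessary conditions as a definition singles out the relation $\mathscr{P}^{\star}$ consisting of those pairs $(d,d')$ with $d\,\mathscr{R}\,d'$ and $d'\,\mathscr{R}\,d$ and, for every $b\in Q$, $d\in {\rightarrow}^{-1}\mathrel\circ\mathscr{R}(b)$ iff $d'\in {\rightarrow}^{-1}\mathrel\circ\mathscr{R}(b)$.

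I would then verify three things in order. First, $\mathscr{P}^{\star}$ is an equivalence relation: reflexivity, symmetry and transitivity follow at once because each of the three clauses in its definition is itself reflexive, symmetric and transitive in $(d,d')$. Second, $\mathscr{P}^{\star}$ is $\mathscr{R}$-stable: it is contained in $\mathscr{R}$ by construction, and its third defining clause is literally \eqref{eq:sameBlock}, so Lemma~\ref{lem:partStability} converts block-stability into the stability inclusion \eqref{eq:partStability}. Third, any $\mathscr{R}$-stable equivalence $\mathscr{P}$ satisfies the three defining clauses of $\mathscr{P}^{\star}$ --- the first two via the inclusion $\mathscr{P}\subseteq\mathscr{R}\cap\mathscr{R}^{-1}$ noted above, the third again via Lemma~\ref{lem:partStability} --- hence $\mathscr{P}\subseteq\mathscr{P}^{\star}$, which is the coarsest statement.

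No step looks technically hard; the only subtlety is to remember that the formal definition of $\mathscr{R}$-stability already forces inclusion in $\mathscr{R}$, for otherwise one might be tempted by the cruder relation ``having the same outgoing $\mathscr{R}$-blocks'', which need not be contained in $\mathscr{R}$ and would fail to be a candidate. An alternative, more abstract route would be to observe that the symmetric and transitive closure of the union of two $\mathscr{R}$-stable equivalences is again $\mathscr{R}$-stable (using transitivity of $\mathscr{R}$ to absorb the closure on the right of \eqref{eq:partStability}) and then invoke finiteness of $Q$; but the explicit construction above is quicker and pins down the structural object the algorithm will later track.
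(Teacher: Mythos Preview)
Your proof is correct but follows a different route from the paper. The paper argues abstractly: the identity relation is $\mathscr{R}$-stable, and the reflexive--transitive closure $(\mathscr{P}_1\cup\mathscr{P}_2)^*$ of two $\mathscr{R}$-stable equivalences is again an $\mathscr{R}$-stable equivalence coarser than both, whence (by finiteness) a maximum exists --- precisely the ``alternative, more abstract route'' you sketch in your last paragraph. You instead construct the maximum explicitly as $\mathscr{P}^{\star}$, the restriction of $\mathscr{R}\cap\mathscr{R}^{-1}$ to pairs that agree on which sets $\rightarrow^{-1}\!\circ\mathscr{R}(b)$ they belong to, and verify the three obligations directly via Lemma~\ref{lem:partStability}. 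The paper's argument is terser to state but leaves the object implicit; yours is more informative, since it exhibits the coarsest equivalence concretely (this is in fact the partition $P_{\mathscr{P}}=P_{\mathscr{V}}$ that item~\ref{th:refinement:item:6} of Theorem~\ref{th:refinement} pins down later) and makes maximality immediate without any closure step or induction. You also correctly flag the one trap: inclusion in $\mathscr{R}$ is part of the \emph{definition} of $\mathscr{R}$-stability, so the first two clauses of $\mathscr{P}^{\star}$ are not optional.
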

\begin{proof}
  With Lemma~\ref{lem:partStability} and by an easy induction based on the two
  following properties:  
  \begin{itemize}
  \item the identity relation, $\mathscr{I}=\{(q,q)\suchthat q\in Q\}$,  is a
    $\mathscr{R}$-stable equivalence relation.
  \item the reflexive and transitive closure
    $(\mathscr{P}_1\cup\mathscr{P}_2)^*$ of the union of two
    $\mathscr{R}$-stable equivalence relations, $\mathscr{P}_1$ and
    $\mathscr{P}_2$, is also a $\mathscr{R}$-stable equivalence relation,
    coarser than them.
  \end{itemize}
\end{proof}

We are now ready to introduce the main result of this section.  It is a
formalization, and a justification, of the refine step given in
Section~\ref{sec:ideas}. In the following theorem, the link with the decreasing
sequence of relations $(\mathscr{R}_i)_{i\geq 0}$ mentioned at the end of
Section~\ref{sec:ideas} is: if $\mathscr{R}_i$ is the current value of
$\mathscr{R}$ then $\mathscr{R}_{i-1}$ is  $\mathscr{U}$ and
$\mathscr{R}_{i+1}$ will be $\mathscr{V}$. The reader can also ease its
comprehension of the theorem by considering \Figure~\ref{fig:intuiSim}.

\begin{theorem}
  \label{th:refinement}  
  Let $\mathscr{U}$ be a preorder, $\mathscr{R}$ be a $\mathscr{U}$-stable
  preorder and $\mathscr{P}$ be the coarsest $\mathscr{R}$-stable equivalence
  relation. Let 
  $\mathit{NotRel}=\mathscr{U}\mathrel\setminus\mathscr{R}$ and
  $\mathscr{V}=\mathscr{R}\mathrel\setminus \mathit{NotRel}'$ with
  \begin{displaymath}
    \mathit{NotRel}' = 
    \bigcup_{\substack{b,c,d\,\in\, Q,\; c\,\rightarrow\, b,\;
        c\,\mathscr{R}\, d,\\
        d\,\in\, \rightarrow^{-1}\circ\, \mathit{NotRel}(b),\;\\ 
        d\,\not\in\, \rightarrow^{-1}\,\circ\,\mathscr{R}(b)}}
    [c]_{\mathscr{P}}{\times} [d]_{\mathscr{P}}
  \end{displaymath}
  Then:
  \begin{enumerate}
  \item\label{th:refinement:item:1}  $\mathit{NotRel}' = X$ with
    \begin{displaymath}
    X = 
    \bigcup_{\substack{b,c,d\,\in\, Q,\; c\,\rightarrow_{\mbox{\tiny $\mathscr{R}$}}\, b,\;
        c\,\mathscr{R}\, d,\\
        d\,\in\, {\rightarrow}_{ \mbox{\tiny $\mathscr{R}$}}^{-1}\circ\, \mathit{NotRel}(b),\;\\ 
        d\,\not\in\, \rightarrow_{\mbox{\tiny $\mathscr{R}$}}^{-1}\,\circ\,\mathscr{R}(b)}}
    \{(c,d)\}
\end{displaymath}

  \item Any simulation $\mathscr{S}$ included in $\mathscr{R}$ is also included
    in $\mathscr{V}$.
  \item\label{th:refinement:item:3}  $\mathscr{V}\mathrel\circ\rightarrow^{-1}\,\subseteq\,
    \rightarrow^{-1}\mathrel\circ \mathscr{R}$
    
  \item $\mathscr{V}$ is a preorder.    
  \item\label{th:refinement:item:5} $\mathscr{V}$ is $\mathscr{R}$-stable.

  \item\label{th:refinement:item:6}  Blocks of $\mathscr{V}$ are blocks of  $\mathscr{P}$ (i.e. $P_{\mathscr{V}}
    =  P_{\mathscr{P}}$).
  \end{enumerate}
\end{theorem}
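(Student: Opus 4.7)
The plan is to prove the six items in cascade, with item~\ref{th:refinement:item:1} as the workhorse that thereafter lets every argument manipulate only $\mathscr{R}$-maximal transitions via Lemma~\ref{lem:transInMaxoR}. The easy inclusion $X \subseteq \mathit{NotRel}'$ is immediate from $\rightarrow_{\mathscr{R}} \subseteq \rightarrow$, the trivial $\{(c,d)\} \subseteq [c]_{\mathscr{P}} \times [d]_{\mathscr{P}}$, and the identity $\rightarrow^{-1}_{\mathscr{R}} \circ \mathscr{R} = \rightarrow^{-1} \circ \mathscr{R}$ from Lemma~\ref{lem:transInMaxoR}. For the converse, given a generator $(b,c,d)$ and any $(c',d') \in [c]_{\mathscr{P}} \times [d]_{\mathscr{P}}$, I will build a single $b''$ via a three-stage chain: apply Lemma~\ref{lem:transInMaxoR} to $c \rightarrow b$ to get $c \rightarrow_{\mathscr{R}} b_1$ with $b \mathscr{R} b_1$; apply the $\mathscr{R}$-stability of $\mathscr{P}$ (Lemma~\ref{lem:partStability}) to $c \mathscr{P} c'$ to obtain $c' \rightarrow b_2$ with $b_1 \mathscr{R} b_2$; and apply Lemma~\ref{lem:transInMaxoR} once more to get $c' \rightarrow_{\mathscr{R}} b''$ with $b_2 \mathscr{R} b''$, so that $b \mathscr{R} b''$ by transitivity. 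The four $X$-conditions on $(b'',c',d')$ then follow: $c' \mathscr{R} d'$ from $\mathscr{P} \subseteq \mathscr{R}$; the non-membership $d' \notin \rightarrow^{-1}_{\mathscr{R}} \circ \mathscr{R}(b'')$ by combining $\mathscr{P}$-block-stability (transferring the condition from $d$ to $d'$ relative to $b$) with $\mathscr{R}(b'') \subseteq \mathscr{R}(b)$; and the membership $d' \in \rightarrow^{-1}_{\mathscr{R}} \circ \mathit{NotRel}(b'')$ by applying $\mathscr{U}$-stability of $\mathscr{R}$ to $c' \rightarrow b''$ and $c' \mathscr{R} d'$ to produce some $d' \rightarrow e'$ with $b'' \mathscr{U} e'$, then Lemma~\ref{lem:transInMaxoR} to upgrade it to an $\mathscr{R}$-maximal $d' \rightarrow_{\mathscr{R}} e''$, with $\neg(b'' \mathscr{R} e'')$ automatic from the non-membership already established.

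Item~\ref{th:refinement:item:3} is the centrepiece. From $z \rightarrow x$ and $z \mathscr{V} y$, $\mathscr{U}$-stability of $\mathscr{R}$ supplies some $x'$ with $x \mathscr{U} x'$ and $y \rightarrow x'$; if $x \mathscr{R} x'$ we are done, and otherwise $x \mathit{NotRel} x'$, so the only possibly-failing $\mathit{NotRel}'$ condition on $(x,z,y)$ is $y \notin \rightarrow^{-1} \circ \mathscr{R}(x)$, and if it held, $z \mathscr{V} y$ would be contradicted, whence $y \in \rightarrow^{-1} \circ \mathscr{R}(x)$ as required. Item~\ref{th:refinement:item:5} is then exactly item~\ref{th:refinement:item:3} by definition of $\mathscr{R}$-stability, and item~2 is a short contradiction using item~\ref{th:refinement:item:1}: a pair $(c',d') \in \mathscr{S} \cap \mathit{NotRel}'$ would yield $c' \rightarrow_{\mathscr{R}} b''$ with $d' \notin \rightarrow^{-1}_{\mathscr{R}} \circ \mathscr{R}(b'')$, whereas $\mathscr{S}$ being a simulation inside $\mathscr{R}$ forces some $d' \rightarrow x$ with $b'' \mathscr{S} x$, hence $b'' \mathscr{R} x$, a contradiction.

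Item~4 splits into reflexivity and transitivity of $\mathscr{V}$. Reflexivity reduces to $(q,q) \notin \mathit{NotRel}'$: were it in $\mathit{NotRel}'$, item~\ref{th:refinement:item:1} would yield $q \rightarrow_{\mathscr{R}} b''$ with $q \notin \rightarrow^{-1}_{\mathscr{R}} \circ \mathscr{R}(b'')$, contradicting $b'' \in \mathscr{R}(b'')$. Transitivity, given $x \mathscr{V} y \mathscr{V} z$ and a putative $(x,z) \in \mathit{NotRel}'$, uses item~\ref{th:refinement:item:3} twice to lift a maximal $x \rightarrow_{\mathscr{R}} b''$ along $y$ to a transition $z \rightarrow b_z$ with $b'' \mathscr{R} b_z$, and Lemma~\ref{lem:transInMaxoR} then places $z \in \rightarrow^{-1}_{\mathscr{R}} \circ \mathscr{R}(b'')$, contradicting item~\ref{th:refinement:item:1}'s characterization. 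For item~\ref{th:refinement:item:6}, the inclusion $[q]_{\mathscr{P}} \subseteq [q]_{\mathscr{V}}$ rests on the observation that $\mathit{NotRel}'$ contains no pair within a single $\mathscr{P}$-block (else $c \mathscr{P} d$ and $c \rightarrow b$ together with $b \in \mathscr{R}(b)$ would, by block-stability, force $d \in \rightarrow^{-1} \circ \mathscr{R}(b)$, contradicting the $\mathit{NotRel}'$ condition), and the converse because $\mathscr{V} \cap \mathscr{V}^{-1}$ is an $\mathscr{R}$-stable equivalence relation inside $\mathscr{R}$ (stability inherited from item~\ref{th:refinement:item:5}), hence contained in the coarsest such relation, $\mathscr{P}$.

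The main obstacle is the three-stage chain inside item~\ref{th:refinement:item:1}: one must track the inequalities $b \mathscr{R} b_1 \mathscr{R} b_2 \mathscr{R} b''$ precisely enough to transfer each of the four generator conditions to $(b'',c',d')$, and must recognize that the $\mathit{NotRel}$-witness $e''$ cannot be transferred directly but is \emph{recreated} via $\mathscr{U}$-stability of $\mathscr{R}$ after $b''$ is fixed, since $\mathit{NotRel}$ is not a preorder and does not admit Lemma~\ref{lem:transInMaxoR} or block-stability.
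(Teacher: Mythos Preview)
Your proposal is correct and follows essentially the same route as the paper: the same cascade through the six items, the same reliance on Lemma~\ref{lem:transInMaxoR} and Lemma~\ref{lem:partStability} for item~\ref{th:refinement:item:1}, the same contradiction via $\mathscr{U}$-stability for item~\ref{th:refinement:item:3}, and the same double application of item~\ref{th:refinement:item:3} for transitivity. The only cosmetic differences are that your chain $b \mathscr{R} b_1 \mathscr{R} b_2 \mathscr{R} b''$ in item~\ref{th:refinement:item:1} unfolds in three explicit steps what the paper compresses into one (it directly invokes Lemmas~\ref{lem:transInMaxoR} and~\ref{lem:partStability} together to land on a single $b'$), and your item~\ref{th:refinement:item:6} converse uses $\mathscr{V}\cap\mathscr{V}^{-1}$ as a ready-made $\mathscr{R}$-stable equivalence relation contradicting maximality of $\mathscr{P}$, whereas the paper builds a specific $\mathscr{P}' = \mathscr{P}\cup B_1{\times}B_2\cup B_2{\times}B_1$ for the same purpose; both are equally valid and equally short.
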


\begin{proof}
  \mbox{}
  \begin{enumerate}
  \item Since $(c,d)$ belongs to $[c]_{\mathscr{P}}{\times} [d]_{\mathscr{P}}$,
    $\rightarrow_{\mathscr{R}} \;\subseteq \; \rightarrow$ and
    $\rightarrow^{-1}\mathrel{\circ}\mathscr{R} \,=\,
    \rightarrow^{-1}_{\mathscr{R}}\mathrel{\circ}\mathscr{R}$, from
    Lemma~\ref{lem:transInMaxoR}, we get $X\subseteq \mathit{NotRel}'$. For the converse,
    let $(c',d')\in \mathit{NotRel}'$. By definition, there are $b,c,d\in Q$ such that
    $c\rightarrow b$, $c\,\mathscr{R}\, d$, $d\in \rightarrow^{-1}\circ\,
    \mathit{NotRel}(b)$, $d\not\in\rightarrow^{-1}\mathrel\circ\mathscr{R}(b)$,
    $c'{\in}[c]_{\mathscr{P}}$ and $d'{\in} [d]_{\mathscr{P}}$. From
    $d\not\in\rightarrow^{-1}\mathrel\circ\mathscr{R}(b)$ and
    Lemma~\ref{lem:transInMaxoR} we have
    $d\not\in\rightarrow_{\mathscr{R}}^{-1}\mathrel\circ\mathscr{R}(b)$. From
    $c\rightarrow b$, Lemma~\ref{lem:transInMaxoR},
    Lemma~\ref{lem:partStability}, and the hypothesis that $\mathscr{P}$ is
    $\mathscr{R}$-stable, we have
    $c'\rightarrow_{\mathscr{R}}^{-1}\mathrel\circ\mathscr{R}(b)$. Therefore,
    there is a state $b'$ such that $c'\rightarrow_{\mathscr{R}}b'$ and
    $b\,\mathscr{R}\, b'$. Les us suppose
    $d'\in\rightarrow_{\mathscr{R}}^{-1}\mathrel\circ\mathscr{R}(b')$. Since
    $b\,\mathscr{R}\, b'$, we would have had
    $d\in\rightarrow_{\mathscr{R}}^{-1}\mathrel\circ\mathscr{R}(b)$. Thus
    $d'\not\in\rightarrow_{\mathscr{R}}^{-1}\mathrel\circ\mathscr{R}(b')$.  We
    have $c'\,\mathscr{P}\, c$, $c\,\mathscr{R}\, d$, $d\,\mathscr{P}\, d'$, and
    thus $c'\,\mathscr{R}\, d'$ since $\mathscr{R}$ is a preorder and
    $\mathscr{P}\subseteq\mathscr{R}$. With $c'\rightarrow_{\mathscr{R}}b'$ and
    the hypothesis that $\mathscr{R}$ is $\mathscr{U}$-stable, we get $d'\in
    \rightarrow^{-1}\mathrel\circ\mathscr{U}(b')$ and thus, with
    Lemma~\ref{lem:transInMaxoR}, $d'\in
    \rightarrow_{\mathscr{R}}^{-1}\mathrel\circ\mathscr{R}\mathrel\circ\mathscr{U}(b')$
    and thus $d'\in \rightarrow_{\mathscr{R}}^{-1}\mathrel\circ\mathscr{U}(b')$
    since $\mathscr{U}$ is a preorder and $\mathscr{R}\subseteq\mathscr{U}$. As
    seen above,
    $d'\not\in\rightarrow_{\mathscr{R}}^{-1}\mathrel\circ\mathscr{R}(b')$. So we
    have
    $d'\in\rightarrow_{\mathscr{R}}^{-1}\mathrel\circ(\mathscr{U}\setminus\mathscr{R})(b')$. In
    summary: $c'\rightarrow_{\mathscr{R}}b'$, $c'\,\mathscr{R}\, d'$,
    $d'\in\rightarrow_{\mathscr{R}}^{-1}\mathrel\circ \mathit{NotRel}(b')$ and
    $d'\not\in\rightarrow_{\mathscr{R}}^{-1}\mathrel\circ\mathscr{R}(b')$. All of this implies
    that $(c',d')\in X$. So we have $\mathit{NotRel}'\subseteq X$ and thus $\mathit{NotRel}'=X$.

  \item By contradiction.  Let $(c,d)\in \mathscr{S}$ such that $(c,d)\not\in
    \mathscr{V}$. This means that $(c,d)\in \mathit{NotRel}'$.  From
    \ref{th:refinement:item:1}) and the hypothesis
    $\mathscr{S}\subseteq\mathscr{R}$ there is $b\in Q$ such that
    $c\rightarrow_{\mathscr{R}} b$,
    $d\not\in\rightarrow_{\mathscr{R}}^{-1}\mathrel\circ\mathscr{R}(b)$ and thus
    $d\not\in\rightarrow^{-1}\mathrel\circ\mathscr{R}(b)$, from
    Lemma~\ref{lem:transInMaxoR}. From $c\rightarrow_{\mathscr{R}} b$, thus
    $c\rightarrow b$, and the assumption that $\mathscr{S}$ is a simulation
    there is $d'\in Q$ with $d\rightarrow d'$ and $b\mathrel\mathscr{S}d'$ thus
    $b\mathrel\mathscr{R}d'$. This contradicts
    $d\not\in\rightarrow^{-1}\mathrel\circ\mathscr{R}(b)$. Therefore
    $\mathscr{S}\subseteq \mathscr{V}$.    
  
  \item If this is not the case, there are $b,c,d\in Q$ such that
    $c\mathrel\mathscr{V}d$, $c\rightarrow b$ and
    $d\not\in\rightarrow^{-1}\mathrel\circ\mathscr{R}(b)$.
    Since $\mathscr{V}\subseteq\mathscr{R}$ and $\mathscr{R}$ is a
    $\mathscr{U}$-stable relation there is $d'\in Q$ such that $d\rightarrow d'$
    and $b\mathrel\mathscr{U}d'$. The case $b\mathrel\mathscr{R}d'$ would contradict
    $d\not\in\rightarrow^{-1}\mathrel\circ\mathscr{R}(b)$. Therefore
    $d\in\rightarrow^{-1}\mathrel\circ \mathit{NotRel}(b)$ and all the conditions are
    met for $(c,d)$ belonging in $\mathit{NotRel}'$ which contradicts
    $c\mathrel\mathscr{V}d$.
    
  \item Let us show that $\mathscr{V}$ is both reflexive and transitive. If it
    is not reflexive, since $\mathscr{R}$ is reflexive, from
    \ref{th:refinement:item:1}) there is $(c,d)$ in $X$ and a state $b$ such
    that $c\rightarrow_{\mathscr{R}} b$ and
    $d\not\in\rightarrow_{\mathscr{R}}^{-1}\mathrel\circ\mathscr{R}(b)$ and
    $c=d$. But this is impossible since $\mathscr{R}$ is reflexive. Hence,
    $\mathscr{V}$ is reflexive.  We also prove by contradiction that
    $\mathscr{V}$ is transitive. If it is not the case, there are $c,e,d\in Q$
    such that $c\mathrel\mathscr{V}e$, $e\mathrel\mathscr{V}d$ but $\mathrel\neg
    c\mathrel\mathscr{V}d$. Since $\mathscr{V}\subseteq\mathscr{R}$ and
    $\mathscr{R}$ is transitive then $c\mathrel\mathscr{R}d$. With $\mathrel\neg
    c\mathrel\mathscr{V}d$ and \ref{th:refinement:item:1}), there is $b$ such
    that $c\rightarrow_{\mathscr{R}} b$ and
    $d\not\in\rightarrow_{\mathscr{R}}^{-1}\mathrel\circ\mathscr{R}(b)$. But
    from \ref{th:refinement:item:3}), there is $b'\in Q$ such that
    $b\mathrel\mathscr{R}b'$ and $e\rightarrow b'$. With $e\mathrel\mathscr{V}d$
    and the same reason, there is $b''$ such that $b'\mathrel\mathscr{R}b''$ and
    $d\rightarrow b''$. By transitivity of $\mathscr{R}$ we get
    $b\mathrel\mathscr{R}b''$ and thus
    $d\in\rightarrow^{-1}\mathrel\circ\mathscr{R}(b)$. With
    Lemma~\ref{lem:transInMaxoR} this  contradicts
    $d\not\in\rightarrow_{\mathscr{R}}^{-1}\mathrel\circ\mathscr{R}(b)$. Hence, $\mathscr{V}$
    is transitive.
    
  \item This is a direct consequence of the two preceding items and the fact
    that by construction $\mathscr{V}\subseteq\mathscr{R}$.
    
  \item By hypothesis, $\mathscr{P}\subseteq\mathscr{R}$. This means that blocks
    of $\mathscr{R}$ are made of blocks of $\mathscr{P}$. By definition,
    $\mathscr{V}$ is obtained by deleting from $\mathscr{R}$ relations between
    blocks of $\mathscr{P}$. This implies that blocks of $\mathscr{V}$ are made
    of blocks of $\mathscr{P}$. To proove that a block of $\mathscr{V}$ is made
    of a single block of $\mathscr{P}$, let us assume, by contradiction, that
    there are two different blocks, $B_1$ and $B_2$, of $\mathscr{P}$ in a block
    of $\mathscr{V}$. We show that $\mathscr{P}$ is not the coarsest
    $\mathscr{R}$-block-stable equivalence relation.  Let
    $\mathscr{P}'=\mathscr{P}\cup B_1\times B_2\cup B_2\times B_1$. Then
    $\mathscr{P}'$ is an equivalence relation strictly coarser than
    $\mathscr{P}'$. Furthermore, since $B_1$ and $B_2$ are blocks of
    $\mathscr{V}$, we get $\mathscr{P}'\subseteq \mathscr{V}$. With
    \ref{th:refinement:item:3}) we get that $\mathscr{P}'$ is
    $\mathscr{R}$-stable and thus $\mathscr{R}$-block-stable with
    Lemma~\ref{lem:partStability}.  This contradicts the hypothesis that
    $\mathscr{P}$ was the coarsest one. Therefore, blocks of $\mathscr{V}$ are
    blocks of $\mathscr{P}$.
        
  \end{enumerate}
\end{proof}

\begin{remark}
  \ref{th:refinement:item:1}) means that blocks of $\mathscr{P}$ are
  sufficiently small to do the refinement step efficiently, as if they were
  states.  \ref{th:refinement:item:6}) means that these blocks cannot be bigger.
  \ref{th:refinement:item:5}) means that we are ready for next split and
  refinement steps.
\end{remark}

In what precedes, we have assumed that for the preorder $\mathscr{R}$, inside
which we want to compute the coarsest simulation, there is another preorder
$\mathscr{U}$ such that condition \eqref{def:relStability} holds. The fifth item
of Theorem~\ref{th:refinement} says that if this true at a given iteration (made
of a split step and a refinement step) of the algorithm then this is true at the
next iteration. For the end of this section we show that we can safely modify
the initial preorder such that this is also initially true.  This is indeed a
simple consequence of the fact that a state with an outgoing transition cannot
be simulated by a state with no outgoing transition.

\begin{definition}
  \label{def:InitRefine}
  Let $\mathscr{R}_{\mathrm{init}}$ be a preorder. We define
  $\initRefine(\mathscr{R}_{\mathrm{init}})$ such that:
  \begin{multline*}
    \OarXiv{\hfill} \initRefine(\mathscr{R}_{\mathrm{init}})\triangleq 
   \mathscr{R}_{\mathrm{init}}\cap
    \{(c,d)\in Q\times Q\suchthat \Olics{\\} \exists c'\in Q\,.\,c\rightarrow c' \;\Rightarrow\;
        \exists d'\in Q\,.\,d\rightarrow d'    \} \OarXiv{\hfill} 
  \end{multline*}
\end{definition}

\begin{proposition}
  \label{prop:InitRefine}
  Let  $\mathscr{R} =
  \initRefine(\mathscr{R}_{\mathrm{init}})$ with $\mathscr{R}_{\mathrm{init}}$ a preorder. Then:
  \begin{enumerate}
  \item $\mathscr{R}$ is $(Q\times Q)$-stable,
  \item a simulation $\mathscr{S}$ included in $\mathscr{R}_{\mathrm{init}}$ is also
    included in $\mathscr{R}$.
  \end{enumerate}
\end{proposition}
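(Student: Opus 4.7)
The plan is to unpack the two conditions directly against the definition of $\initRefine$, observing that the extra conjunct in $\initRefine$ — namely "if $c$ has some outgoing transition then $d$ has some outgoing transition" — was engineered precisely so that the trivial $(Q\times Q)$-stability holds, while the simulation property of $\mathscr{S}$ guarantees that this conjunct is automatically satisfied by every pair in $\mathscr{S}$.

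For item 1, I would unfold \eqref{def:relStability} with $\mathscr{U}=Q\times Q$. Starting from an arbitrary pair $(c,d)\in\mathscr{R}$ and a transition $c\rightarrow b$, what I must produce is some $d'$ with $d\rightarrow d'$ (the side condition $b\,(Q\times Q)\,d'$ being vacuous). But $c\rightarrow b$ witnesses $\exists c'.\,c\rightarrow c'$, so the defining implication of $\initRefine$ applied to $(c,d)\in\mathscr{R}$ furnishes exactly such a $d'$. This is literally a one-line verification once the stability condition is rewritten.

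For item 2, I would pick an arbitrary $(c,d)\in\mathscr{S}$. Membership in $\mathscr{R}_{\mathrm{init}}$ is immediate from $\mathscr{S}\subseteq\mathscr{R}_{\mathrm{init}}$, so it remains to check the implication $\exists c'.\,c\rightarrow c'\Rightarrow \exists d'.\,d\rightarrow d'$. Assuming $c\rightarrow c'$, I would apply the simulation property of $\mathscr{S}$ in the form \eqref{eq:simulation}: from $(c,d)\in\mathscr{S}$ and $(c',c)\in\rightarrow^{-1}$ we get $(c',d)\in\rightarrow^{-1}\mathrel\circ\mathscr{S}$, which yields some $d'$ with $d\rightarrow d'$ (and $c'\,\mathscr{S}\,d'$, not needed here). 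Thus $(c,d)\in\mathscr{R}$.

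There is essentially no obstacle: both items are a direct unpacking, with no induction, fixed-point argument, or appeal to the earlier lemmas on maximal transitions. The only point requiring a moment's care is to get the direction of composition right when invoking \eqref{eq:simulation}, so as to confirm that the hypothesis $c\,\mathscr{S}\,d$ together with a forward transition out of $c$ produces a forward transition out of $d$, which is exactly the conjunct built into $\initRefine$.
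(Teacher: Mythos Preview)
Your arguments for the stability inclusion in item~1 and for item~2 are correct and match the paper's reasoning essentially line for line (the paper phrases item~2 by contradiction, but it is the same argument).

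There is, however, a genuine omission in your treatment of item~1. By Definition~\ref{def:Stability}, the notion ``$\mathscr{R}$ is $\mathscr{U}$-stable'' is only defined for $\mathscr{R}$ a preorder (and $\mathscr{U}$ a coarser preorder). You verify the inclusion \eqref{def:relStability} but never check that $\mathscr{R}=\initRefine(\mathscr{R}_{\mathrm{init}})$ is itself a preorder. Reflexivity is immediate, but transitivity is not entirely vacuous: one must argue that the set $\{(c,d)\mid (\exists c'.\,c\rightarrow c')\Rightarrow(\exists d'.\,d\rightarrow d')\}$ is itself transitive (which it is, by chaining the two implications), so that its intersection with the preorder $\mathscr{R}_{\mathrm{init}}$ remains a preorder. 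The paper's proof devotes a short paragraph to this point, establishing transitivity of $\mathscr{R}$ by contradiction before turning to the inclusion you handle. Once you add this verification, your proof is complete and coincides with the paper's.
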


\begin{proof}
  \mbox{}
  \begin{enumerate}
  \item
    $Q\times Q$ is trivially a preorder. It remains to show that $\mathscr{R}$ is
    also a preorder and that \eqref{def:relStability} is true with
    $\mathscr{U}=Q\times Q$.
    
    Since $\mathscr{R}_{\mathrm{init}}$ is a preorder and thus reflexive,
    $\mathscr{R}$ is also trivially reflexive. Now, by contradiction, let us
    suppose that $\mathscr{R}$ is not transitive. There are three states $c,e,d\in Q$
    such that: $c\,\mathscr{R}\,e\wedge e\,\mathscr{R}\,d \wedge \neg\;
    c\,\mathscr{R}\,d$. From the fact that
    $\mathscr{R}\subseteq\mathscr{R}_{\mathrm{init}}$ and $\mathscr{R}_{\mathrm{init}}$ is a
    preorder, we get $c\,\mathscr{R}_{\mathrm{init}}\,d$. With $\neg\; c\,\mathscr{R}\,d$
    and the definition of $\mathscr{R}$ this means that $c$ has a successor
    while $d$ has not. But the hypothesis that $c$ has a successor and
    $c\,\mathscr{R}\,e$ implies that $e$ has a successor. With
    $e\,\mathscr{R}\,d$ we also get that $d$ has also a successor, which contradicts
    what is written above. Hence, $\mathscr{R}$ is transitive and thus a
    preorder.

    The formula $\mathscr{R}\mathrel{\circ}\rightarrow^{-1}\subseteq
    \rightarrow^{-1}\circ (Q\times Q)$ just means that the two hypotheses,
    $c\,\mathscr{R}\,d$ and $c$ has a successor, imply that $d$ has also a
    successor. This is exactly the meaning of the second part of the
    intersection in the definition of $\mathscr{R}$.    
     
  \item By contradiction, if this is not true there is $(c,d)$ a pair of states
    which belongs to $\mathscr{S}$ and $\mathscr{R}_{\mathrm{init}}$ but does not belong
    to $\mathscr{R}$. By definition of $\mathscr{R}$ this means that $c$ has a
    successor while $d$ has not. But the hypotheses $c\,\mathscr{S}\,d$, $c$ has
    a successor and $\mathscr{S}$ is a simulation imply that $d$ has also a
    successor which contradicts what is written above. Hence, $\mathscr{S}$ is
    also included in $\mathscr{R}$.    
   \end{enumerate}
\end{proof}

The total relation $Q\times Q$  will thus play the role of the
initial $\mathscr{U}$ in the algorithm.

\begin{remark}
  In \cite[p. 979]{PT87} there is also a similar preprocessing of the initial
  partition where states with no output transition are put aside.
\end{remark}

\section{The Algorithm}
\label{sec:algo}

The approach of the previous section can be applied to several algorithms, with
different balances between time complexity and space complexity.  It can also be
extended to labelled transition systems.  In this section the emphasis is on
the, theoretically, most efficient in memory of the fastest simulation
algorithms of the moment.

\subsection{Counters and Splitter Transitions}
\label{sec:splitter-transitions}

Let us remember that a partition $P$ and its associated equivalence relation
$\mathscr{P}_P$ (or a equivalence relation $\mathscr{P}$ and its associated
partition $P_{\mathscr{P}}$) denote essentially the same thing. The difference
is that for a partition we focus on the set of blocks whereas for an equivalence
relation we focus on the relation which relates the elements of a same block.
For a first reading, the reader may consider that a partition is an equivalence relation, and vice versa.

From a preorder $\mathscr{R}$ that satisfies \eqref{def:relStability} we will
need to split its blocks in order to find its coarsest
$\mathscr{R}$-block-stable equivalence relation. Then, Theorem~\ref{th:refinement}
will be used for a refine step. For all this, we first need the traditional
\Split function.

\begin{definition}
   \label{def:split}   
   Given a partition $P$ and a set of states $Marked$, the function
   $\Split(P,Marked)$ returns a partition similar to $P$, but with the
   difference that each block $E$ of $P$ such that $E\cap Marked \neq\emptyset$
   and $E\not\subseteq Marked$ is replaced by two blocks: $E_1=E\cap Marked$ and
   $E_2=E\mathrel\setminus E_1$.   
\end{definition}

To efficiently perform the split and refine steps we need a set of counters
which associates to each representative state of a block, of an equivalence relation,
the number of blocks, of that same equivalence relation, it
reaches in $\mathscr{R}(B)$, for $B$ a block of $\mathscr{R}$.

\begin{definition}
  Let $\mathscr{P}$ be an equivalence relation included in a preorder
  $\mathscr{R}$. We assume that for each block $E$ of $\mathscr{P}$, a 
  representative state $E.\Rep$ has been chosen. Let $E$ be a block of
  $\mathscr{P}$, $B$ be a block of $\mathscr{R}$ and $B'\subseteq B$.  We
  define:  
  \begin{multline}
   \label{eq:RelCount}   
   \OarXiv{\hfill} \RelCount_{(\mathscr{P},\mathscr{R})}(E,B')\triangleq \Olics{\\}
    |\{E'\in P_{\mathscr{P}}\suchthat E.\Rep\rightarrow E'\wedge
    B'\mathrel\mathscr{R} E'\}| \OarXiv{\hfill}
  \end{multline}
\end{definition}

\begin{proposition}
  \label{prop:RelCount}
  Let $\mathscr{P}$ be an equivalence relation included in a preorder
  $\mathscr{R}$, $E$ be a block of $\mathscr{P}$, $B$ be a block of
  $\mathscr{R}$ and $B'$ be a non empty subset of $B$. Then:  
  \begin{displaymath}
    \begin{split}
      \RelCount_{(\mathscr{P},\mathscr{R})}(E,B)=
      \RelCount_{(\mathscr{P},\mathscr{R})}(E,B')
    \end{split}
\end{displaymath}
\end{proposition}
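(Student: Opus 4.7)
The plan is to unfold the definition of $\RelCount_{(\mathscr{P},\mathscr{R})}$ and argue that for every block $E' \in P_{\mathscr{P}}$ with $E.\Rep \rightarrow E'$, the condition $B' \mathrel\mathscr{R} E'$ is equivalent to $B \mathrel\mathscr{R} E'$. This equivalence of conditions will imply the two counted sets are identical, hence the counters agree.

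First I would observe that because $\mathscr{P} \subseteq \mathscr{R}$, every block of $\mathscr{P}$ is contained in a single block of $\mathscr{R}$. So for each candidate $E'$ there exists a block $F$ of $\mathscr{R}$ with $E' \subseteq F$. The state of play is then: we have $B' \subseteq B$ with $B$ a block of $\mathscr{R}$, $E' \subseteq F$ with $F$ a block of $\mathscr{R}$, $B'$ non empty by hypothesis, and $E'$ non empty because it is a block of a partition.

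Next, I would apply Proposition~\ref{prop:RexistForAll} in both directions. For the forward implication, assuming $B' \mathrel\mathscr{R} E'$ with $B' \subseteq B$ and $E' \subseteq F$ (both of which are blocks of the preorder $\mathscr{R}$), the proposition yields $B \times F \subseteq \mathscr{R}$, and in particular, since $E'$ is non empty, $B \mathrel\mathscr{R} E'$. For the reverse implication, $B \mathrel\mathscr{R} E'$ together with $B \subseteq B$ and $E' \subseteq F$ gives again $B \times F \subseteq \mathscr{R}$ by the same proposition; then using the non-emptiness of $B'$ (our sole use of that hypothesis) we pick any $b \in B'$ and any $e' \in E'$ and conclude $B' \mathrel\mathscr{R} E'$.

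Combining the two implications shows that the sets
\[
\{E' \in P_{\mathscr{P}} \mid E.\Rep \rightarrow E' \wedge B \mathrel\mathscr{R} E'\}
\quad\text{and}\quad
\{E' \in P_{\mathscr{P}} \mid E.\Rep \rightarrow E' \wedge B' \mathrel\mathscr{R} E'\}
\]
coincide, so their cardinalities are equal, which is the claim. There is no real obstacle: the argument is a direct two-line application of Proposition~\ref{prop:RexistForAll}. The only subtlety worth flagging is the need for $B'$ to be non empty, without which $B' \mathrel\mathscr{R} E'$ would vacuously fail even when $B \mathrel\mathscr{R} E'$ holds.
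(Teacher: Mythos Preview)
Your proof is correct and follows essentially the same approach as the paper: the paper's entire proof is the one-liner ``Thanks to the transitivity of $\mathscr{R}$,'' and your argument is precisely the detailed unpacking of that remark via Proposition~\ref{prop:RexistForAll} (itself a direct consequence of transitivity). Your explicit flagging of where the non-emptiness of $B'$ is used is a nice addition.
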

\begin{proof}
  Thanks to the transitivity of $\mathscr{R}$.
\end{proof}

Following Section~\ref{sec:ideas}, the purpose of these counters is
to check in constant time whether a block $E$ of an equivalence relation
$\mathscr{P}$ is included in $\rightarrow^{-1}\mathrel{\circ}\mathscr{R}(b)$ for
a given state $b$. But this is correct only if $\mathscr{P}$ is already
$\mathscr{R}$-block-stable. If this is not the case, its underlying partition
should be split accordingly.
We thus introduce the first condition which
necessitates a split of the current equivalence relation $\mathscr{P}$ to
approach the coarsest $\mathscr{R}$-block-stable equivalence relation.
For this, we take advantage of the existence of 
$\RelCount_{(\mathscr{P},\mathscr{R})}$. 

\begin{definition}
  Let $\mathscr{P}$ be an equivalence relation included in a preorder
  $\mathscr{R}$,  $E$ be a block of $\mathscr{P}$ and $B$ be a block of
  $\mathscr{R}$ such that $E\rightarrow B$ and
  $\RelCount_{(\mathscr{P},\mathscr{R})}(E,B)=0$. The transition $E\rightarrow
  B$ is called a $(\mathscr{P},\mathscr{R})$-\emph{splitter transition of type
    1}.
\end{definition}

The intuition is as follows. With a block $E$ of $\mathscr{P}$ and a block $B$
of $\mathscr{R}$, if $\mathscr{P}$ was $\mathscr{R}$-block-stable, with
$E\rightarrow B$ and Lemma~\ref{lem:partStability} we would have had
$E\,\subseteq\,\rightarrow^{-1}\mathrel\circ\mathscr{R}(B)$. But
$\RelCount_{(\mathscr{P},\mathscr{R})}(E,B)=0$ denies this. So we have to split
$P_\mathscr{P}$.

\begin{lemma}
  \label{lem:split1}
  Let $E\rightarrow B$ be a $(\mathscr{P},\mathscr{R})$-splitter transition of type
  1. Let
  $P'=\Split(P_{\mathscr{P}},\rightarrow^{-1}\mathrel\circ\mathscr{R}(B))$. Then
  $\mathscr{P}_{P'}$ is strictly included in $\mathscr{P}$ and contains all
  $\mathscr{R}$-block-stable equivalence relations included in $\mathscr{P}$.  
\end{lemma}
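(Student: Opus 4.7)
The argument naturally splits into two independent claims. The unifying observation is that the set $T := \rightarrow^{-1}\mathrel\circ\mathscr{R}(B)$ is precisely the predicate that $\mathscr{R}$-block-stability forces every equivalence class to respect (with respect to the block $B$), and that $\Split(P_{\mathscr{P}},T)$ cuts $\mathscr{P}$-blocks along exactly this predicate.

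For strict inclusion, I plan to show that the block $E$ itself is non-trivially split by exhibiting a witness on each side of $T$. From $E\rightarrow B$ I get some $q\in E$ with a transition into $B$, and since $\mathscr{R}$ is reflexive, $q\in E\cap T$. For a witness outside $T$ I turn to $E.\Rep$: if $E.\Rep$ had a transition to some $q'\in\mathscr{R}(B)$, then the $\mathscr{P}$-block $E'=[q']_{\mathscr{P}}$ would satisfy $E.\Rep\rightarrow E'$ together with $B\mathrel\mathscr{R}E'$ (witnessed by the very pair $(b,q')$ placing $q'$ in $\mathscr{R}(B)$), which would contribute $1$ to $\RelCount_{(\mathscr{P},\mathscr{R})}(E,B)$ and contradict the hypothesis that this counter equals $0$. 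Hence $E.\Rep\in E\setminus T$, and $\Split$ strictly refines $\mathscr{P}$.

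For the preservation claim, let $\mathscr{Q}$ be an $\mathscr{R}$-block-stable equivalence relation with $\mathscr{Q}\subseteq\mathscr{P}$. Fix any $b\in B$; since $B$ is a block of $\mathscr{R}$, transitivity yields $\mathscr{R}(B)=\mathscr{R}(b)$ and hence $T=\rightarrow^{-1}\mathrel\circ\mathscr{R}(b)$. Applying \eqref{eq:sameBlock} (with $\mathscr{Q}$ in the role of $\mathscr{P}$) at this specific $b$, every $\mathscr{Q}$-block is either entirely contained in $T$ or entirely disjoint from it. Because $\Split$ only separates states of a $\mathscr{P}$-block according to membership in $T$, no $\mathscr{Q}$-block is broken, so each $\mathscr{Q}$-block still sits inside a single $P'$-block, i.e.\ $\mathscr{Q}\subseteq\mathscr{P}_{P'}$.

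The only delicate step is the translation from the purely numerical hypothesis $\RelCount_{(\mathscr{P},\mathscr{R})}(E,B)=0$ to the set-theoretic assertion $E.\Rep\notin T$; this is where one must carefully unfold the definition of the counter to see that any single transition from $E.\Rep$ into $\mathscr{R}(B)$ would already force a strictly positive contribution, contradicting the counter value. The remainder is essentially bookkeeping about how $\Split$ interacts with a sub-equivalence.
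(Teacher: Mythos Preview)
Your proposal is correct and follows essentially the same route as the paper: for strict inclusion you exhibit $e\in E\cap T$ from $E\rightarrow B$ and reflexivity of $\mathscr{R}$, and $E.\Rep\in E\setminus T$ from $\RelCount_{(\mathscr{P},\mathscr{R})}(E,B)=0$; for preservation you use \eqref{eq:sameBlock} to conclude that every $\mathscr{R}$-block-stable class already respects $T$. The only difference is cosmetic---you spell out the unfolding of the counter and the identity $\mathscr{R}(B)=\mathscr{R}(b)$ for $b\in B$, whereas the paper dispatches both with ``by definition.''
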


\begin{proof}
  To prove the strict inclusion, let us show that $E$ is split.  From
  $E\rightarrow B$ there is $e\in E$ such that $e\rightarrow B$ and thus $e\in
  \rightarrow^{-1}\mathrel\circ\mathscr{R}(B)$ since $\mathscr{R}$ is
  reflexive. Furthermore, by definition,
  $\RelCount_{(\mathscr{P},\mathscr{R})}(E,B)=0$ means that
  $E.\Rep\not\in\rightarrow^{-1}\mathrel\circ\mathscr{R}(B)$. This implies that
  $e$ and $E.\Rep$ do not belong to the same block of $P'$. Thus, at least $E$
  has been split. Now, let $\mathscr{P}''$ be a $\mathscr{R}$-block-stable
  equivalence relation included in $\mathscr{P}$. If
  $\mathscr{P}''$ is not included in $\mathscr{P}'$ there are two states, $h_1$
  and $h_2$, from a block $H$ of $\mathscr{P}''$ such that $h_1\in
  \rightarrow^{-1}\mathrel\circ\mathscr{R}(B)$ and $h_2\not\in
  \rightarrow^{-1}\mathrel\circ\mathscr{R}(B)$. But, by definition, if
  $\mathscr{P}''$ is $\mathscr{R}$-block-stable,  $h_1\in
  \rightarrow^{-1}\mathrel\circ\mathscr{R}(B)$ implies  $h_2\in
  \rightarrow^{-1}\mathrel\circ\mathscr{R}(B)$ which leads to a contradiction. 
  Therefore $\mathscr{P}''$ is
  included in $\mathscr{P}'$.  
\end{proof}

Saying that $\mathscr{P}_{P'}$ is strictly included in $\mathscr{P}$ means that
at least one block of $\mathscr{P}$, here $E$, has been split to obtain $P'$.

\begin{lemma}
  \label{lem:afterSplit1}
  Let $\mathscr{P}$ be an equivalence relation included in a preorder $\mathscr{R}$
  such that there is  no
  $(\mathscr{P},\mathscr{R})$-splitter transition of type 1.  Let $E$
  be a block of $\mathscr{P}$ and $B$ be a block of $\mathscr{R}$ such that
  $E\rightarrow B$. Then, 
  $E.\Rep\in\rightarrow_{\mathscr{R}}^{-1}\mathrel\circ\mathscr{R}(B)$.
\end{lemma}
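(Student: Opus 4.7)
The plan is to use the definition of $\RelCount$, the fact that no splitter transition of type 1 exists, and then invoke Lemma~\ref{lem:transInMaxoR} to upgrade an ordinary transition into a maximal one.

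First I would observe that the hypothesis $E\rightarrow B$ together with the absence of a $(\mathscr{P},\mathscr{R})$-splitter transition of type 1 forces $\RelCount_{(\mathscr{P},\mathscr{R})}(E,B)\neq 0$, because otherwise the transition $E\rightarrow B$ would itself be a splitter transition of type 1. Unfolding Definition~\eqref{eq:RelCount}, this yields a block $E'\in P_{\mathscr{P}}$ such that $E.\Rep\rightarrow E'$ and $B\mathrel\mathscr{R} E'$. In particular there is a state $b'\in E'$ with $E.\Rep\rightarrow b'$.

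Next I would upgrade the relation $B\mathrel\mathscr{R} E'$ to $b'\in\mathscr{R}(B)$. Since $\mathscr{P}\subseteq\mathscr{R}$, the block $E'$ is included in a single block $C$ of $\mathscr{R}$. The relation $B\mathrel\mathscr{R} E'$ then witnesses that two subsets of the blocks $B$ and $C$ of $\mathscr{R}$ are $\mathscr{R}$-related, so Proposition~\ref{prop:RexistForAll} gives $B\times C\subseteq\mathscr{R}$. In particular $B\mathrel\mathscr{R} b'$, i.e. $b'\in\mathscr{R}(B)$.

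Combining $E.\Rep\rightarrow b'$ with $b'\in\mathscr{R}(B)$ gives $E.\Rep\in \rightarrow^{-1}\mathrel\circ\mathscr{R}(B)$. Finally, the equality $\rightarrow^{-1}\mathrel\circ\mathscr{R}=\rightarrow_{\mathscr{R}}^{-1}\mathrel\circ\mathscr{R}$ from Lemma~\ref{lem:transInMaxoR} converts this into $E.\Rep\in\rightarrow_{\mathscr{R}}^{-1}\mathrel\circ\mathscr{R}(B)$, which is the desired conclusion. The only subtle step is the appeal to Proposition~\ref{prop:RexistForAll}; everything else is immediate from the definitions.
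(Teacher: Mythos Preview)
Your proof is correct and follows essentially the same route as the paper's: from the absence of type-1 splitter transitions deduce $\RelCount_{(\mathscr{P},\mathscr{R})}(E,B)\neq 0$, hence $E.\Rep\in\rightarrow^{-1}\mathrel\circ\mathscr{R}(B)$, and then apply Lemma~\ref{lem:transInMaxoR}. The only difference is that the paper states the passage from $\RelCount\neq 0$ to $E.\Rep\in\rightarrow^{-1}\mathrel\circ\mathscr{R}(B)$ in one line, while you unfold it and justify it via Proposition~\ref{prop:RexistForAll}; simple transitivity of $\mathscr{R}$ together with $\mathscr{P}\subseteq\mathscr{R}$ would also have sufficed there.
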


\begin{proof}
  Let $E\rightarrow B$ be a transition with $E$ and $B$ satisfying the
  hypotheses of the lemma. Since there is no
  $(\mathscr{P},\mathscr{R})$-splitter transition of type 1,
  $\RelCount_{(\mathscr{P},\mathscr{R})}(E,B)\neq 0$. Therefore, $E.\Rep\in
  \rightarrow^{-1}\mathrel\circ\mathscr{R}(B)$ and thus $E.\Rep\in
  \rightarrow_{\mathscr{R}}^{-1}\mathrel\circ\mathscr{R}(B)$ by
  Lemma~\ref{lem:transInMaxoR}.
\end{proof}

Now, for $E$ to be really a representative, we need the following implication:
$E.\Rep\rightarrow B \Rightarrow E\,\subseteq\,
\rightarrow^{-1}\mathrel\circ\mathscr{R}(B)$. But to check this property
effectively, taking advantage of the counters, we need a stronger property
equivalent with the one, \eqref{eq:sameBlock}, defining block-stability. 

\begin{lemma}
  \label{lem:strongerBlockStability}
  Let $\mathscr{P}$ be an equivalence relation included in a preorder
  $\mathscr{R}$. Then \eqref{eq:sameBlock} is equivalent with: 
  \begin{equation}
    \label{eq:strongerBlockStability}
    \mathscr{P}\mathrel\circ\rightarrow_{\mathscr{R}}^{-1}\,\subseteq\,
    \rightarrow^{-1}\mathrel\circ  [\cdot]_{\mathscr{R}}
  \end{equation}  
\end{lemma}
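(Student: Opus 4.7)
The plan is to invoke Lemma~\ref{lem:partStability} to replace \eqref{eq:sameBlock} by its equivalent rephrasing \eqref{eq:partStability}, i.e.\ $\mathscr{P}\mathrel\circ\rightarrow^{-1}\subseteq\rightarrow^{-1}\mathrel\circ\mathscr{R}$, and then show the equivalence of \eqref{eq:partStability} and \eqref{eq:strongerBlockStability}. Since $\rightarrow_{\mathscr{R}}\subseteq\rightarrow$ and $[\cdot]_{\mathscr{R}}\subseteq\mathscr{R}$, formulation \eqref{eq:strongerBlockStability} looks pointwise stronger than \eqref{eq:partStability}, so the real content is concentrated in the implication \eqref{eq:partStability}$\Rightarrow$\eqref{eq:strongerBlockStability}.

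For the easy direction \eqref{eq:strongerBlockStability}$\Rightarrow$\eqref{eq:partStability}, I would first use Lemma~\ref{lem:transInMaxoR} to factor any ordinary transition as a maximal one followed by $\mathscr{R}$, obtaining $\mathscr{P}\mathrel\circ\rightarrow^{-1}\subseteq\mathscr{P}\mathrel\circ\rightarrow_{\mathscr{R}}^{-1}\mathrel\circ\mathscr{R}$. Applying \eqref{eq:strongerBlockStability} to the left factor and then collapsing $[\cdot]_{\mathscr{R}}\mathrel\circ\mathscr{R}$ into $\mathscr{R}$ via $[\cdot]_{\mathscr{R}}\subseteq\mathscr{R}$ and transitivity of $\mathscr{R}$ gives \eqref{eq:partStability}.

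The substance lies in the converse direction \eqref{eq:partStability}$\Rightarrow$\eqref{eq:strongerBlockStability}. Given $(x,y)\in\mathscr{P}\mathrel\circ\rightarrow_{\mathscr{R}}^{-1}$, I fix a witness $z$ with $z\rightarrow_{\mathscr{R}}x$ and $z\,\mathscr{P}\,y$. Reflexivity of $\mathscr{R}$ yields $z\in\rightarrow^{-1}\mathrel\circ\mathscr{R}(x)$, so \eqref{eq:sameBlock} applied to the pair $z\,\mathscr{P}\,y$ produces some $u$ with $y\rightarrow u$ and $x\,\mathscr{R}\,u$. Lemma~\ref{lem:transInMaxoR} then upgrades $y\rightarrow u$ to a maximal transition $y\rightarrow_{\mathscr{R}}u'$ with $u\,\mathscr{R}\,u'$, and transitivity gives $x\,\mathscr{R}\,u'$.

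The main obstacle is to also produce $u'\,\mathscr{R}\,x$, so that $u'\in[x]_{\mathscr{R}}$. My plan is to do a round trip: apply \eqref{eq:sameBlock} a second time, now via the symmetric pair $y\,\mathscr{P}\,z$, to the fact $y\in\rightarrow^{-1}\mathrel\circ\mathscr{R}(u')$ (which holds because $y\rightarrow u'$ and $\mathscr{R}$ is reflexive). This produces some $v$ with $z\rightarrow v$ and $u'\,\mathscr{R}\,v$, whence $x\,\mathscr{R}\,u'\,\mathscr{R}\,v$ yields $x\,\mathscr{R}\,v$ by transitivity. Now comes the crucial step: the maximality of $z\rightarrow_{\mathscr{R}}x$ applied to $z\rightarrow v$ with $x\,\mathscr{R}\,v$ forces $v\in[x]_{\mathscr{R}}$, hence $v\,\mathscr{R}\,x$; chaining $u'\,\mathscr{R}\,v\,\mathscr{R}\,x$ delivers $u'\,\mathscr{R}\,x$. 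Together with $x\,\mathscr{R}\,u'$, this means $u'\in[x]_{\mathscr{R}}$, and combined with $y\rightarrow u'$ it witnesses $(x,y)\in\rightarrow^{-1}\mathrel\circ[\cdot]_{\mathscr{R}}$, which is \eqref{eq:strongerBlockStability}.
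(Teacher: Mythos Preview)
Your proof is correct and follows essentially the same round-trip argument as the paper: both reduce to the equivalence of \eqref{eq:partStability} and \eqref{eq:strongerBlockStability} via Lemma~\ref{lem:partStability}, and for the hard direction both use \eqref{eq:partStability} twice (once along $z\,\mathscr{P}\,y$, once along the symmetric $y\,\mathscr{P}\,z$) and then invoke the $\mathscr{R}$-maximality of $z\rightarrow_{\mathscr{R}}x$ to close the loop. The only difference is that your intermediate upgrade of $y\rightarrow u$ to a maximal transition $y\rightarrow_{\mathscr{R}}u'$ is unnecessary: the paper runs the second application directly on $u$ (their $d'$), obtaining $u\in[x]_{\mathscr{R}}$ without passing through $u'$.
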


\begin{proof}
  With Lemma~\ref{lem:partStability} it suffices to show the equivalence between
  \eqref{eq:partStability} and \eqref{eq:strongerBlockStability}:
  
  \begin{itemize}
  \item []\eqref{eq:partStability} $\Rightarrow$
    \eqref{eq:strongerBlockStability}. Let $(c',d)\in
    \mathscr{P}\mathrel\circ\rightarrow_{\mathscr{R}}^{-1}$. There is a state
    $c$ such that  $(c,d)\in\mathscr{P}$ and
    $c\rightarrow_{\mathscr{R}}c'$. With \eqref{eq:partStability} there is a state
    $d'$ such that $c'\,\mathscr{R}\,d'$ and $d\rightarrow d'$. With
    $(d,c)\in\mathscr{P}$, since $\mathscr{P}$ is symmetric, and
    \eqref{eq:partStability} again there is a state $c''$ such that
    $d'\,\mathscr{R}\,c''$ and $c\rightarrow c''$. But $c\rightarrow c'$ being a
    $\mathscr{R}$-maximal transition implies that $c''\in[c']_{\mathscr{R}}$. So
    we have $c''\,\mathscr{R}\,c'$, $c'\,\mathscr{R}\,d'$,
    $d'\,\mathscr{R}\,c''$ and thus $d'\in[c']_{\mathscr{R}}$. This means that
    $d\in \rightarrow^{-1}\mathrel\circ [\cdot]_{\mathscr{R}}(c')$ and thus
    $(c',d)\in \rightarrow^{-1}\mathrel\circ [\cdot]_{\mathscr{R}}$. Therefore $
    \mathscr{P}\mathrel\circ\rightarrow_{\mathscr{R}}^{-1}\,\subseteq\,
    \rightarrow^{-1}\mathrel\circ [\cdot]_{\mathscr{R}}$.

  \item [] \eqref{eq:strongerBlockStability} $\Rightarrow$
    \eqref{eq:partStability}. With Lemma~\ref{lem:transInMaxoR} we have
    $\mathscr{P}\mathrel\circ\rightarrow^{-1}\,\subseteq\,\mathscr{P}\mathrel\circ\rightarrow_{\mathscr{R}}^{-1}\mathrel\circ\mathscr{R}$.
    With \eqref{eq:strongerBlockStability} we have
    $\mathscr{P}\mathrel\circ\rightarrow_{\mathscr{R}}^{-1}\mathrel\circ\mathscr{R}\,\subseteq\,
    \rightarrow^{-1}\mathrel\circ
    [\cdot]_{\mathscr{R}}\mathrel\circ\mathscr{R}$. But $\mathscr{R}$ being a
    preorder we have
    $[\cdot]_{\mathscr{R}}\mathrel\circ\mathscr{R}\,\subseteq\,\mathscr{R}$. With
    all of this: $\mathscr{P}\mathrel\circ\rightarrow_{\mathscr{R}}^{-1}\,\subseteq\,
    \rightarrow^{-1}\mathrel\circ \mathscr{R}$.
  \end{itemize}
\end{proof}

\begin{definition}
  \label{def:split2}
  Let $\mathscr{P}$ be an equivalence relation included in a preorder
  $\mathscr{R}$. Let $E$ be a block of $\mathscr{P}$ and $B$ be a block
  of $\mathscr{R}$ such that $E.\Rep\rightarrow B$,
  $\RelCount_{(\mathscr{P},\mathscr{R})}(E,B)= |\{[b]_{\mathscr{P}}\subseteq
  B\suchthat E.\Rep\rightarrow b\}|$ and $E\nsubseteq\rightarrow^{-1}(B)$. The
  transition $E\rightarrow B$ is called a
  $(\mathscr{P},\mathscr{R})$-\emph{splitter transition of type 2}.
\end{definition}

\begin{remark}
  The conditions in Definition~\ref{def:split2} are inspired from those used in
  \cite{Ran14} for its split step.
\end{remark} 

The intuition is as follows. If $E.\Rep\rightarrow B$ and the condition on the
counter is true (all transitions from $E.\Rep$ that reach states greater,
relatively to $\mathscr{R}$, than $B$ actually have their destination states in
$B$), this means that the transition $E.\Rep\rightarrow B$ is maximal. With
Lemma~\ref{lem:strongerBlockStability}, if $\mathscr{P}$ is
$\mathscr{R}$-block-stable this should imply
$E\subseteq\rightarrow^{-1}(B)$. Since this is not the case, $\mathscr{P}$ must
be split.

\begin{lemma}
  \label{lem:split2}
  Let $\mathscr{P}$ be an equivalence relation included in a preorder $\mathscr{R}$
  such that there is no $(\mathscr{P},\mathscr{R})$-splitter transition of type
  1 and let $E\rightarrow B$ be a splitter transition of type 2.
 Let
  $P'=\Split(P_{\mathscr{P}},E\cap\rightarrow^{-1}(B))$. Then
  $\mathscr{P}_{P'}$ is strictly included in $\mathscr{P}$ and contains all
  $\mathscr{R}$-block-stable equivalence relations included in $\mathscr{P}$.
\end{lemma}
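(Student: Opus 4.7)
The plan is to handle strict inclusion and preservation separately; the second claim is the more delicate one and combines both the ``no type 1 splitter'' assumption on $(\mathscr{P},\mathscr{R})$ and the counter equality of Definition~\ref{def:split2}.

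For strict inclusion, I would note that $E.\Rep\rightarrow B$ places $E.\Rep$ inside $E\cap\rightarrow^{-1}(B)$, while the hypothesis $E\nsubseteq\rightarrow^{-1}(B)$ supplies some state of $E$ outside $\rightarrow^{-1}(B)$. Hence both parts of the split of $E$ are non-empty, and $\mathscr{P}_{P'}$ is strictly included in $\mathscr{P}$.

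For preservation, I would argue by contradiction. Suppose $\mathscr{P}''\subseteq\mathscr{P}$ is an $\mathscr{R}$-block-stable equivalence relation with $\mathscr{P}''\nsubseteq\mathscr{P}_{P'}$. Since only the block $E$ is affected by the split, there exist $h_1,h_2$ in a common $\mathscr{P}''$-block with $h_1\in E\cap\rightarrow^{-1}(B)$ and $h_2\in E\setminus\rightarrow^{-1}(B)$. Pick $b\in B$ with $h_1\rightarrow b$. By Lemma~\ref{lem:partStability}, the $\mathscr{R}$-block-stability of $\mathscr{P}''$ is equivalent to $\mathscr{P}''\mathrel\circ\rightarrow^{-1}\,\subseteq\,\rightarrow^{-1}\mathrel\circ\mathscr{R}$, so from $(b,h_2)\in\mathscr{P}''\mathrel\circ\rightarrow^{-1}$ we obtain some $b''$ with $h_2\rightarrow b''$ and $b\mathrel\mathscr{R}b''$. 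Let $B_2=[b'']_{\mathscr{R}}$; then $B\mathrel\mathscr{R}B_2$.

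The crux is to force $B_2=B$. From $h_2\in E$ and $h_2\rightarrow b''\in B_2$ we have $E\rightarrow B_2$, so by the hypothesis that there is no $(\mathscr{P},\mathscr{R})$-splitter transition of type 1, $\RelCount_{(\mathscr{P},\mathscr{R})}(E,B_2)\neq 0$, producing a $\mathscr{P}$-block $E_3$ with $E.\Rep\rightarrow E_3$ and $B_2\mathrel\mathscr{R}E_3$. By transitivity of $\mathscr{R}$, $E_3\subseteq\mathscr{R}(B)$. The counter equality in Definition~\ref{def:split2} says precisely that every $\mathscr{P}$-block reached from $E.\Rep$ and contained in $\mathscr{R}(B)$ must in fact be contained in $B$ (the right-hand side counts such blocks inside $B$, the left-hand side counts the analogous blocks inside the larger set $\mathscr{R}(B)$, and equality forces no extra block in $\mathscr{R}(B)\setminus B$); hence $E_3\subseteq B$. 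This gives $B_2\mathrel\mathscr{R}B$, and antisymmetry of the induced preorder $R_\mathscr{R}$ on blocks of $\mathscr{R}$ then yields $B=B_2$. Thus $b''\in B$, contradicting $h_2\not\in\rightarrow^{-1}(B)$.

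I expect the main obstacle to be this closing chain: recognizing that the no-type-1-splitter hypothesis is what transfers the successor witness from $h_2$ back to $E.\Rep$ (as a $\mathscr{P}$-block sitting in $\mathscr{R}(B)$), that the counter equality of Definition~\ref{def:split2} is what then pins that block inside $B$ itself, and that antisymmetry of $R_\mathscr{R}$ is what finally collapses $B_2$ to $B$.
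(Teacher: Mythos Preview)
Your proof is correct and follows essentially the same route as the paper's. The only cosmetic difference is in the auxiliary tools invoked in the preservation step: the paper upgrades $h_1\rightarrow B$ to an $\mathscr{R}$-maximal transition via Lemma~\ref{lem:transInMaxoR}, uses the stronger block-stability characterization (Lemma~\ref{lem:strongerBlockStability}) to push $h_2$ into that same $\mathscr{R}$-block $G_1$, and then applies Lemma~\ref{lem:afterSplit1} to reach $E.\Rep\rightarrow_{\mathscr{R}}G_2$; you instead work directly with Lemma~\ref{lem:partStability} and the raw definition of $\RelCount$, then close with antisymmetry of $R_{\mathscr{R}}$. Both chains carry the same content (no type~1 splitter transfers the witness to $E.\Rep$, and the counter equality pins the resulting $\mathscr{P}$-block inside $B$), and your explicit observation that only the block $E$ is touched by the split---since $E\cap\rightarrow^{-1}(B)\subseteq E$---is a clean justification the paper leaves implicit.
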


\begin{figure}[!t]
  \centering
   \includegraphics[scale=1]{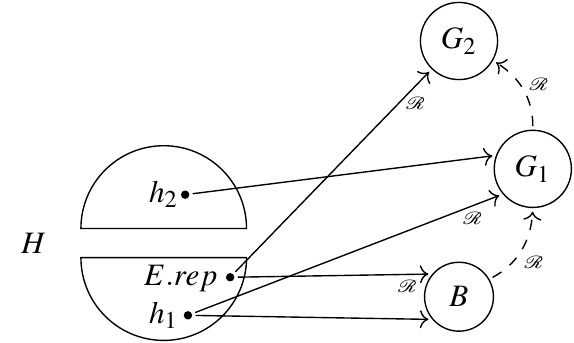}
   \caption{Proof of second statement of Lemma~\ref{lem:split2}.}
   \label{fig:proofSplit2}   
\end{figure}
\begin{proof}
  To prove the strict inclusion, let us show that $E$ is split.  From
  $E\rightarrow B$ being a $(\mathscr{P},\mathscr{R})$-splitter transition of
  type 2 we have $|\{[b]_{\mathscr{P}}\subseteq B\suchthat E.\Rep\rightarrow
  b\}|\neq 0$ and $E\nsubseteq\rightarrow^{-1}(B)$. The first property implies
  that $ E.\Rep\in {\rightarrow^{-1}(B)}$ and the second one implies the existence
  of a state $e\in E$ which does not belong to $\rightarrow^{-1}(B)$. Therefore
  $E$ has been split.  The second statement is proved by contradiction. Let
  $\mathscr{P}''$ be a $\mathscr{R}$-block-stable equivalence relation included in
  $\mathscr{P}$. Consider \Figure\ref{fig:proofSplit2}. If $\mathscr{P}''$ is
  not included in $\mathscr{P}'$ there are two states, $h_1$ and $h_2$, from a
  block $H$ of $\mathscr{P}''$ such that $h_1\in \rightarrow^{-1}(B)$ and
  $h_2\not\in \rightarrow^{-1}(B)$.  With Lemma~\ref{lem:transInMaxoR} there is
  a block $G_1$ of $\mathscr{R}$ such that: $B\mathrel\mathscr{R}G_1$ and
  $h_1\rightarrow_{\mathscr{R}}G_1$. Since $\mathscr{P}''$ is
  $\mathscr{R}$-block-stable this implies $h_2\rightarrow G_1$. With
  Lemma~\ref{lem:afterSplit1} there is a block $G_2$ of $\mathscr{R}$ such that
  $E.\Rep\rightarrow_{\mathscr{R}} G_2$ and $G_1\mathrel\mathscr{R} G_2$. But
  the condition $\RelCount_{(\mathscr{P},\mathscr{R})}(E,B)=
  |\{[b]_{\mathscr{P}}\subseteq B\suchthat E.\Rep\rightarrow b\}|$ implies that
  the transition from $E.\Rep$ to $B$ is maximal which implies that
  $G_2=B=G_1$. So we have $h_2\in \rightarrow^{-1}(B)$ which contradicts an
  above assumption. Therefore $\mathscr{P}''$ is included in $\mathscr{P}'$.  
\end{proof}

\begin{theorem}
  \label{th:split1_2}
 Let $\mathscr{P}$ be an equivalence relation included in a preorder $\mathscr{R}$
 such that there is no
  $(\mathscr{P},\mathscr{R})$-splitter transition of type 1 or
  $(\mathscr{P},\mathscr{R})$-splitter transition of type 2. Then $\mathscr{P}$
  is $\mathscr{R}$-block-stable.
\end{theorem}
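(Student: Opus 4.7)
The plan is to use Lemma~\ref{lem:partStability} to reduce the goal: $\mathscr{P}$ is $\mathscr{R}$-block-stable iff $\mathscr{P}\mathrel\circ\rightarrow^{-1}\,\subseteq\,\rightarrow^{-1}\mathrel\circ\mathscr{R}$. So I will fix an arbitrary pair $(c,d)\in\mathscr{P}$ and a transition $c\rightarrow c'$, and construct $d'$ with $d\rightarrow d'$ and $c'\mathrel\mathscr{R}d'$. Let $E\triangleq[c]_{\mathscr{P}}=[d]_{\mathscr{P}}$ and $B\triangleq[c']_{\mathscr{R}}$; note $E\rightarrow B$.

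First, since there is no splitter transition of type~1, Lemma~\ref{lem:afterSplit1} applied to $E\rightarrow B$ yields a state $b^*$ such that $E.\Rep\rightarrow_{\mathscr{R}}b^*$ and $B\mathrel\mathscr{R}[b^*]_{\mathscr{R}}$. Set $B^*\triangleq[b^*]_{\mathscr{R}}$, so $B\mathrel\mathscr{R}B^*$ and $E.\Rep\rightarrow b^*\in B^*$.

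Next, I would like to invoke the hypothesis that there is no $(\mathscr{P},\mathscr{R})$-splitter transition of type~2 on the pair $(E,B^*)$. This is the delicate step: one must verify the counter identity $\RelCount_{(\mathscr{P},\mathscr{R})}(E,B^*)=|\{[b]_{\mathscr{P}}\subseteq B^*\suchthat E.\Rep\rightarrow b\}|$. The $\supseteq$ inclusion of the underlying sets of blocks is immediate since $[b]_{\mathscr{P}}\subseteq B^*$ implies $B^*\mathrel\mathscr{R}[b]_{\mathscr{P}}$ by transitivity of $\mathscr{R}$. For the $\subseteq$ inclusion, the $\mathscr{R}$-maximality of $E.\Rep\rightarrow b^*$ forces any $E'\in P_{\mathscr{P}}$ with $E.\Rep\rightarrow E'$ and $B^*\mathrel\mathscr{R}E'$ to satisfy $E'\subseteq B^*$, via Definition~\ref{def:maxTrans}. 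Hence the equality holds. Since $E.\Rep\rightarrow B^*$ and no splitter of type~2 exists, the definition's conclusion forces $E\subseteq\rightarrow^{-1}(B^*)$; in particular $d$ admits a transition $d\rightarrow d^*$ with $d^*\in B^*$.

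Finally, combining $c'\in B$, $d^*\in B^*$ and $B\mathrel\mathscr{R}B^*$ gives $c'\mathrel\mathscr{R}d^*$, closing the diagram and establishing $\mathscr{P}\mathrel\circ\rightarrow^{-1}\,\subseteq\,\rightarrow^{-1}\mathrel\circ\mathscr{R}$.

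The main obstacle is the counter-identity verification: one has to be careful that $\RelCount_{(\mathscr{P},\mathscr{R})}$ counts $\mathscr{P}$-blocks and that the maximality of the witness transition from $E.\Rep$ rules out any $\mathscr{P}$-block sitting strictly above $B^*$. Everything else is bookkeeping with preorder transitivity and the two lemmas already proved.
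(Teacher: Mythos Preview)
Your argument is correct and follows essentially the same core strategy as the paper: apply Lemma~\ref{lem:afterSplit1} to obtain an $\mathscr{R}$-maximal transition from $E.\Rep$, use that maximality to verify the counter identity, and then invoke the absence of type~2 splitters to conclude $E\subseteq\rightarrow^{-1}(B^*)$.

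The one noteworthy difference is in the target characterization of block-stability. The paper aims for the stronger form~\eqref{eq:strongerBlockStability} of Lemma~\ref{lem:strongerBlockStability}, so it starts from an $\mathscr{R}$-maximal transition $E\rightarrow_{\mathscr{R}}B$ and, after obtaining $E\subseteq\rightarrow^{-1}(G)$ with $B\mathrel\mathscr{R}G$, uses the maximality of the \emph{original} transition $e\rightarrow_{\mathscr{R}}B$ to force $G=B$. You instead target the weaker form~\eqref{eq:partStability} of Lemma~\ref{lem:partStability}, start from an arbitrary transition $c\rightarrow c'$, and are content to land in $B^*$ with $B\mathrel\mathscr{R}B^*$; this already gives $c'\mathrel\mathscr{R}d^*$ without needing $B^*=B$. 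Your route is therefore slightly more economical: it dispenses with Lemma~\ref{lem:strongerBlockStability} and the final identification step, at no cost in rigor.
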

\begin{figure}[!t]
  \centering
  \includegraphics[scale=1]{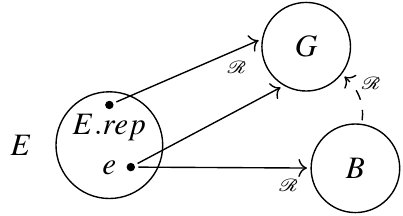}

   \caption{Proof of Theorem~\ref{th:split1_2}.}
   \label{fig:proofThSplit12partStable}   
\end{figure}
\begin{proof}
  Consider \Figure\ref{fig:proofThSplit12partStable}. Let us consider a
  transition $E\rightarrow_{\mathscr{R}} B$ with $E$ a block of $\mathscr{P}$
  and $B$ a block of $\mathscr{R}$. By definition there is a state $e\in E$ such
  that $e\rightarrow_{\mathscr{R}} B$ and by Lemma~\ref{lem:afterSplit1} there
  is a block $G$ such that $B\mathrel\mathscr{R} G$ and
  $E.\Rep\rightarrow_{\mathscr{R}} G$. From ${E.\Rep\rightarrow_{\mathscr{R}}
    G}$ it is easy to show that $\RelCount_{(\mathscr{P},\mathscr{R})}(E,G)\neq
  0$ and $\RelCount_{(\mathscr{P},\mathscr{R})}(E,G)=
  |\{[g]_{\mathscr{P}}\subseteq G\suchthat E.\Rep\rightarrow g\}|$ since the
  transition $E.\Rep\rightarrow G$ is maximal. But since there is neither
  $(\mathscr{P},\mathscr{R})$-splitter transition of type 1 nor
  $(\mathscr{P},\mathscr{R})$-splitter transition of type 2 then
  $E\subseteq\rightarrow^{-1}(G)$ and thus $e\rightarrow G$. With
  $e\rightarrow_{\mathscr{R}} B$ and $B\mathrel\mathscr{R} G$ we necessarily get
  $B=G$ and thus $E\subseteq\rightarrow^{-1}(B)$. So we have
  $E\rightarrow_{\mathscr{R}} B$ implies $E\subseteq\rightarrow^{-1}(B)$. This is
  equivalent of saying that  \eqref{eq:strongerBlockStability} is true. With
  Lemma~\ref{lem:strongerBlockStability} this implies that $\mathscr{P}$ is
  $\mathscr{R}$-block-stable.
  
\end{proof}

Therefore, in the algorithm, before a refine step on $\mathscr{R}$ using
Theorem~\ref{th:refinement}, we will start from the partition $P_{\mathscr{R}}$
and split it in conformity with Lemma~\ref{lem:split1} and
Lemma~\ref{lem:split2}. By doing so, we will obtain the coarsest
$\mathscr{R}$-block-stable equivalence relation.

The next proposition shows where to search splitter transitions: those who ends
in blocks $B$ of $\mathscr{R}$ such that $\mathit{NotRel}(B)$ is not empty.

\begin{proposition}
  \label{prop:notRel}   
  Let $\mathscr{U}$ be a preorder, $\mathscr{R}$ be a $\mathscr{U}$-stable
  preorder, $\mathscr{P}$ be an equivalence relation included in $\mathscr{R}$ and
  let $\mathit{NotRel}=\mathscr{U}\mathrel\setminus\mathscr{R}$. Then   
  
  \begin{enumerate}
  \item If $E\rightarrow B$ is a $(\mathscr{P},\mathscr{R})$-splitter
    transition of type 1 then $\mathit{NotRel}(B)\neq \emptyset$. 
  \item Under the absence of $(\mathscr{P},\mathscr{R})$-splitter
    transition of type 1, if $E\rightarrow B$ is a
    $(\mathscr{P},\mathscr{R})$-splitter transition of type 2 then
    $\mathit{NotRel}(B)\neq \emptyset$.     
  \end{enumerate}
\end{proposition}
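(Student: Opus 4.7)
The plan is to derive both items by using the $\mathscr{U}$-stability of $\mathscr{R}$ to transfer transitions between $E.\Rep$ and the other elements of $E$ (they are $\mathscr{R}$-equivalent since $\mathscr{P}\subseteq\mathscr{R}$), then reading off what each counter hypothesis forbids about the $\mathscr{P}$-blocks reached from $E.\Rep$. A useful preliminary observation is that every state of $B$ has the same $\mathscr{R}$- and $\mathscr{U}$-successors, so the conclusion $\mathit{NotRel}(B)\neq\emptyset$ is equivalent to $\mathscr{R}(B)\subsetneq\mathscr{U}(B)$; this is what I will produce in both cases.

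For item~1, I would pick $e\in E$ and $b\in B$ with $e\rightarrow b$, use $e\mathrel\mathscr{R}E.\Rep$ to feed $(b,E.\Rep)\in\mathscr{R}\circ\rightarrow^{-1}$ into \eqref{def:relStability}, and obtain a state $z$ with $E.\Rep\rightarrow z$ and $b\mathrel\mathscr{U}z$. The hypothesis $\RelCount_{(\mathscr{P},\mathscr{R})}(E,B)=0$ says that no $\mathscr{P}$-block reached from $E.\Rep$ lies $\mathscr{R}$-above $B$; in particular $b\not\mathrel\mathscr{R}z$, so $z\in\mathit{NotRel}(B)$. This case is routine.

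For item~2 I would argue by contradiction, assuming $\mathit{NotRel}(B)=\emptyset$, i.e.\ $\mathscr{U}(B)=\mathscr{R}(B)$. Choose $b\in B$ with $E.\Rep\rightarrow b$ and, using $E\nsubseteq\rightarrow^{-1}(B)$, some $e\in E$ with $e\notin\rightarrow^{-1}(B)$. Applying $\mathscr{U}$-stability in the other direction, via $E.\Rep\mathrel\mathscr{R}e$, produces $z_1$ with $e\rightarrow z_1$ and $b\mathrel\mathscr{U}z_1$. The contradiction hypothesis forces $z_1\in\mathscr{R}(B)$, while the choice of $e$ rules out $z_1\in B$; thus $z_1\in\mathscr{R}(B)\setminus B$.

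The main obstacle is that the type~2 counter condition constrains $E.\Rep$, not $e$, so I need to pull the information back to the representative. Since there are no type~1 splitters and $E\rightarrow[z_1]_{\mathscr{R}}$, Lemma~\ref{lem:afterSplit1} yields some $z_2$ with $E.\Rep\rightarrow_{\mathscr{R}}z_2$ and $z_1\mathrel\mathscr{R}z_2$; transitivity of $\mathscr{R}$ places $z_2\in\mathscr{R}(B)$, so the type~2 equality forces $[z_2]_{\mathscr{P}}\subseteq B$, hence $z_2\in B$. Chaining $z_1\mathrel\mathscr{R}z_2\mathrel\mathscr{R}b$ with the already established $b\mathrel\mathscr{R}z_1$ gives $z_1\in B$, contradicting $z_1\notin B$. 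This second, indirect use of stability through the representative is precisely what the type~2 counter condition is designed to make work.
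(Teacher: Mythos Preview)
Your proof is correct and follows essentially the same route as the paper: item~1 is identical, and for item~2 both arguments use $\mathscr{U}$-stability to produce a successor $z_1$ of the non-reaching element $e$, then invoke Lemma~\ref{lem:afterSplit1} together with the maximality encoded in the type~2 counter equality to force $z_1\in B$, contradicting $e\notin\rightarrow^{-1}(B)$. The only cosmetic difference is that you wrap item~2 in a global contradiction on $\mathit{NotRel}(B)=\emptyset$, whereas the paper directly shows $e\in\rightarrow^{-1}\circ\mathscr{U}(B)\setminus\rightarrow^{-1}\circ\mathscr{R}(B)$ and reads off the witness; the inner mechanics are the same.
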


\begin{figure}[!t]
  \centering
   \includegraphics[scale=1]{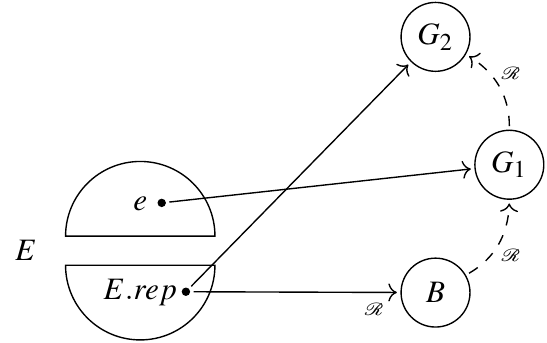}
  \caption{Proof of second item of Proposition~\ref{prop:notRel}.}
  \label{fig:notRel}   
\end{figure}

\begin{proof}
  \mbox{}
  \begin{enumerate}
  \item Let $E\rightarrow B$ be a $(\mathscr{P},\mathscr{R})$-splitter
    transition of type 1. By definition there is $e\in E$ such that
    $e\rightarrow B$. Since $E$ is a block of $\mathscr{P}$ we have
    $e\mathrel\mathscr{P} E.\Rep$ and thus $e\mathrel\mathscr{R} E.\Rep$.  With
    the hypothesis that $\mathscr{R}$ is $\mathscr{U}$-stable we get
    $E.\Rep\in\rightarrow^{-1}\mathrel\circ\mathscr{U}(B)$.  But the hypothesis
    that $E\rightarrow B$ is a $(\mathscr{P},\mathscr{R})$-splitter transition
    of type 1 implies
    $E.\Rep\not\in\rightarrow^{-1}\mathrel\circ\mathscr{R}(B)$. From these two
    last constraints on $E.\Rep$ we get
    $E.\Rep\in\rightarrow^{-1}\mathrel\circ(\mathscr{U}\mathrel\setminus\mathscr{R})(B)$
    and thus $\mathit{NotRel}(B)\neq \emptyset$.
    
  \item Consider \Figure\ref{fig:notRel}. Let $E\rightarrow B$ be a
    $(\mathscr{P},\mathscr{R})$-splitter transition of type 2. By definition we
    get $E.\Rep\rightarrow_{\mathscr{R}}B$ and
    $E_2=E\mathrel\setminus(\rightarrow^{-1}(B))$ is not empty. Let $e\in E_2$,
    with a similar argument than the first item, we get
    $e\in\rightarrow^{-1}\mathrel\circ\mathscr{U}(B)$. By contradiction, let us
    assume $e\in\rightarrow^{-1}\mathrel\circ\mathscr{R}(B)$. There is $G_1$ a
    block of $\mathscr{R}$ such that $B\mathrel\mathscr{R} G_1$ and
    $e\rightarrow G_1$. From Lemma~\ref{lem:afterSplit1} there is $G_2$ a block
    of $\mathscr{R}$ such that $E.\Rep\rightarrow G_2$ and
    $G_1\mathrel\mathscr{R} G_2$, and thus $B\mathrel\mathscr{R} G_2$. Since the
    transition from $E.\Rep$ to $B$ is maximal this implies that $B=G_1=G_2$
    which contradicts $e\not\in\rightarrow^{-1}(B)$. Therefore
    $e\not\in\rightarrow^{-1}\mathrel\circ\mathscr{R}(B)$. With
    $e\in\rightarrow^{-1}\mathrel\circ\mathscr{U}(B)$ we get
    $e\in\rightarrow^{-1}\mathrel\circ(\mathscr{U}\mathrel\setminus\mathscr{R})(B)$
    and thus $\mathit{NotRel}(B)\neq \emptyset$.    
  \end{enumerate}
\end{proof}

We have now everything to propose an efficient algorithm.

\subsection{Data Structures and Space Complexity}
\label{sec:data-structures}


\begin{remark}
  Since the final partition $P_{\mathrm{sim}}$ is obtained after several splits
  of the initial partition $P_{\mathrm{init}}$ we have $|P|\leq
  |P_{\mathrm{sim}}|$ with $P$ the current partition at a given step of the
  algorithm.
\end{remark}

The current relation $\mathscr{R}$ is represented in the algorithm by a
partition-relation pair $(P,Rel)$.  The data structure used to represent a
partition is traditionally a (possibly) doubly linked list of blocks, themselves
represented by a doubly linked list of states.  But in practice, each node of a
list contains a reference, to the next node of the list. The size (typically 64
bits nowadays) of these references is static and does not depend on the size of
the list. We therefore prefer the use of arrays because we can control
the size of the slots and manipulate arrays is faster than manipulate lists. The
idea, see \cite{VL08} and \cite{Cec13a} for more details, is to identify a state with
its index in $Q$ and to distribute these indexes in an array, let us name it
$T$, such that states belonging to the same block are in consecutive slots in
$T$.  A block could thus be represented by the indexes of its first and last
elements in $T$, the two indexes defining a contiguous subarray. If a block is
split, the two subblocks form two contiguous subarrays of that subarray.  By
playing with these arrays (other arrays are needed, like the one giving the
position in $T$ of a state) we obtain a representation of a partition which
allows splitting (some elements of a block are removed from their original block
and form a new block) and scanning of a block in linear time.

However, as seen in the previous sections, we need two
generations of blocks at the same time: the first one corresponds to blocks of
$\mathscr{R}$ and the second one corresponds to blocks of the next generation,
$\mathscr{V}$, of this preorder. Hence, we need an intermediate between blocks
and their corresponding states: \emph{nodes}. A node corresponds to a block or
to an ancestor of a block in the family tree of the different generations of
blocks issued from the split steps of the algorithm. To simplify the writing, we
associate in the present paper, a node to the set of its corresponding states. As
an example, consider 
a block $B$ which consists of the following three states $\{q_1,q_2,q_3\}$. In
reality, we will associate $B$ to a node $N=\{q_1,q_2,q_3\}$. In this way, if
$B$ is split in $B_1$ and $B_2$, corresponding respectively to $\{q_1,q_2\}$ and
$\{q_3\}$, we create two new nodes $N_1=\{q_1,q_2\}$ and $N_2=\{q_3\}$ and we
associate $B_1$ to $N_1$ and $B_2$ to $N_2$. By doing so, $N$ remains bounds to
the set $\{q_1,q_2,q_3\}$.  To represent a node we just need to keep in memory
the index of its first element and the index of its last element in the array
$T$ (see the previous paragraph). When a block which corresponds to a node is
split, the corresponding states change their places in $T$ but keep in the same
subarray. Let us note that when a block is split, it is necessarily in two parts.
Therefore, the number of nodes is at most twice the number of blocks and the bit
space needed to represent the partition and the nodes is in $O(|Q|.\log(|Q|))$
since there is less blocks than states.

To be more precise about the relations between states, blocks and nodes: at any
time of the algorithm, the index of a state $q$ is associated to the index of
its block $q.\Block$, the index of a block $E$ is associated to the index of its
node $E.\Node$ and the index of a node is associated to the states it contains
(via two indexes of the array $T$). A node which is not linked by a block is an
ancestor of, at least two, blocks.  By the data structure chosen to represent
the partition it is easy to see that given a node $N$ we can scan in linear time
the states it contains (this corresponds to the scan of a contiguous subarray)
and the blocks it contains (by a similar process). The function
$\ChooseBlock(N)$ which arbitrarily choose one block whose set of elements is
included in those of $N$ is executed in constant time (we choose $e.\Block$,
with $e$ the first element of $N$). Similarly, the function $\ChooseState{E}$
which returns a state of a block $E$, used to defined $E.\Rep$ a representative
of $E$, is also executed in constant time (we choose the first element in
$E.\Node$).

The relation $Rel$ is distributed on the blocks. To each block $C$ we
associate an array of booleans, $C.\Rel$, such that $(C,D)\in Rel$, what we note $D\in C.\Rel$, iff the boolean
at the index of the block $D$ in $C.\Rel$ is true. These arrays are resizable
arrays whose capacities are doubled as needed. Therefore the classical
operations on these arrays, like get and set, take constant amortized time. We
use this type of array wherever necessary. The bit size needed to represent
$Rel$ is therefore in $O(|P_{\mathrm{sim}}|^2)$.

The relation $\mathscr{U}$ that appears in the previous sections is not directly
represented. We use instead the equality $\mathscr{U} = \mathscr{R}\cup \mathit{NotRel}$
and represent $\mathit{NotRel}$. Since $\mathscr{U}$ is a coarser preorder than
$\mathscr{R}$, for a given node $B$ which represents a block of $\mathscr{R}$,
the set $\mathit{NotRel}(B)$ is represented in the algorithm by $B.\NotRel$ a set of
nodes (encoded by a resizable array of the indexes of the corresponding nodes)
which represent blocks of $\mathscr{R}$. As explained earlier, we have to use
nodes instead of blocks because nodes never change whereas blocks can be split
afterwards.  The bit space representation of $\mathit{NotRel}$ is thus in
$O(|P_{\mathrm{sim}}|^2.\log(|P_{\mathrm{sim}}|))$. Remember, the number of
nodes is linear in the maximal number of blocks: $|P_{\mathrm{sim}}|$.

In Section~\ref{sec:splitter-transitions} we introduced a counter,
$\RelCount_{(\mathscr{P},\mathscr{R})}(E,B)$, for each pair made of a block $E$
of the current equivalence relation $\mathscr{P}$ represented in the algorithm by
the current partition $P$, and a block $B$ of the current relation $\mathscr{R}$
represented in the algorithm by $(P,Rel)$. As seen in
Proposition~\ref{prop:RelCount}, for any subblock $B'\in P_{\mathscr{P}}$ of $B$
we have $\RelCount_{(\mathscr{P},\mathscr{R})}(E,B) =
\RelCount_{(\mathscr{P},\mathscr{R})}(E,B')$. Therefore, we can limit these
counters to any pair of blocks of the current partition $P$ in the
algorithm. Such a counter counts a number of blocks. This means that the total
bit size of these counters is in
$O(|P_{\mathrm{sim}}|^2.\log(|P_{\mathrm{sim}}|))$. In practice, we associate to
each block $B'$ a resizable array, $B'.\RelCountAlgo$, of $|P|$ elements such that
$B'.\RelCountAlgo(E)=\RelCount_{(\mathscr{P},\mathscr{R})}(E,B')$.

At several places in the algorithm we use a data structure to manage a set of
indexed elements. This is the case for $Touched$, $Touched'$, $Marked$,
$RefinerNodes$, $RefinerNodes'$, $PreB'$, $Remove$, $PreE$ and $PreE'$.  Such a
set is implemented by a resizable boolean array, to know in constant time whether a given
element belongs to the set, and another resizable array to store the indexes of
the elements which belongs to the set. This last array serves to scan in linear
time the elements of the set or to emptied the set in linear time of the number
of the elements in the set. So the operations, add an element in the set and
test whether an element belongs to the set are done in constant time or
amortized constant time. Also, scanning the elements of the set and emptying the
set are executed in linear time of the size of the set.  We use a finite number
of these sets for states, blocks or nodes. The overall bit space used for them
is therefore in $O(|Q|.\log(|Q|))$.  The other variables used in the algorithm,
some booleans and a counter, $E.\Count$, associated to each block $E$ in
Function \ref{func:Split2} are manipulated by constant time operations and need a
bit space in $O(|Q|.\log(|Q|))$ since $|P|$ and the number of nodes are both in
$O(|Q|)$. From all of this, we derive the following theorem.

\begin{theorem}
  The overall bit space used by the presented simulation algorithm is in \OarXiv{\\}
  $O(|P_{\mathrm{sim}}|^2.\log(|P_{\mathrm{sim}}|) + |Q|.\log(|Q|))$. 
\end{theorem}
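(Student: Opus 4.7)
The proof is essentially an accounting exercise: Section~\ref{sec:data-structures} has already derived the bit space used by each individual data structure, and it remains only to sum these contributions and verify that nothing else in the algorithm contributes a larger term. I would therefore organize the proof as a single enumeration of the algorithm's data structures, citing the bound established earlier for each one.

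The plan is, first, to list the components whose bit size lies in $O(|Q|.\log(|Q|))$: the array $T$ together with the auxiliary arrays giving, for each state, its position in $T$ and the index of its containing block; the array encoding the node of each block; and the representation of nodes themselves (two indices into $T$ per node, with at most $2|P_{\mathrm{sim}}| \le 2|Q|$ nodes). To this group I add the auxiliary resizable sets $Touched$, $Touched'$, $Marked$, $RefinerNodes$, $RefinerNodes'$, $PreB'$, $Remove$, $PreE$ and $PreE'$, each indexed over states, blocks, or nodes and hence of bit size $O(|Q|.\log(|Q|))$, as well as the scalar counters and booleans (notably $E.\Count$ from Function~\ref{func:Split2}).

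Second, I would collect the components whose bit size depends on $|P_{\mathrm{sim}}|$. The partition-relation pair representation stores, for each block $C$, a resizable boolean array $C.\Rel$ of length at most $|P_{\mathrm{sim}}|$, contributing $O(|P_{\mathrm{sim}}|^2)$ bits overall. The relation $\mathit{NotRel}$ is stored, for each node $B$ representing a block of $\mathscr{R}$, as a resizable array $B.\NotRel$ of node indices; since there are $O(|P_{\mathrm{sim}}|)$ nodes and each index needs $O(\log |P_{\mathrm{sim}}|)$ bits, the total is $O(|P_{\mathrm{sim}}|^2.\log(|P_{\mathrm{sim}}|))$. Finally, the counter table $B'.\RelCountAlgo$ has, by Proposition~\ref{prop:RelCount}, one entry $B'.\RelCountAlgo(E) = \RelCount_{(\mathscr{P},\mathscr{R})}(E,B')$ per pair of current blocks $(E,B')$; each entry stores an integer between $0$ and the number of blocks, i.e.\ bounded by $|P_{\mathrm{sim}}|$, for a total of $O(|P_{\mathrm{sim}}|^2.\log(|P_{\mathrm{sim}}|))$ bits.

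The conclusion then follows simply by adding the two groups: everything charged to states sums to $O(|Q|.\log(|Q|))$, everything charged to block/node pairs sums to $O(|P_{\mathrm{sim}}|^2.\log(|P_{\mathrm{sim}}|))$, giving the announced bound. The only substantive point that is not purely bookkeeping is justifying that one really can get away with $O(\log|P_{\mathrm{sim}}|)$ bits per slot in $C.\Rel$, $B.\NotRel$ and $B'.\RelCountAlgo$, rather than $O(\log|Q|)$: this is exactly the reason we introduced \emph{nodes} as stable intermediaries between blocks and states, so that block and counter data need never be indexed by individual states. I expect the only mildly delicate step is checking that resizing the arrays $C.\Rel$, $B.\NotRel$ and $B'.\RelCountAlgo$ (by doubling their capacity) preserves the stated space bound, which follows from the standard geometric-growth argument that yields at most twice the final occupied size.
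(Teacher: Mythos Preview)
Your proposal is correct and follows exactly the paper's own approach: the theorem is stated immediately after Section~\ref{sec:data-structures} with the single sentence ``From all of this, we derive the following theorem,'' so the proof \emph{is} the preceding enumeration of data structures and their bit sizes, which you have faithfully reproduced. One minor slip: $C.\Rel$ is a boolean array, so its slots cost $O(1)$ bits each (not $O(\log|P_{\mathrm{sim}}|)$), giving the $O(|P_{\mathrm{sim}}|^2)$ total the paper states; this does not affect your final bound.
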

\begin{remark}
  We assume one can iterate through the transition relation $\rightarrow$ in
  linear time of its size.  From each state $q$ we also assume one can iterate
  through the set $\rightarrow^{-1}(q)$ in linear time of its size. It is a
  tradition in most articles dealing with simulation to not count the space used
  to represent the transition relation since it is considered as an input
  data. If it was to be counted it would cost $O(|\rightarrow|.\log(|Q|))$ bits.
\end{remark}

\subsection{Procedures and Time Complexity}

In this section we analyze the different functions of the algorithm and give
their overall time complexities. The reader should remember that, in the
algorithm, a block is just an index and the set of states corresponding to a block $E\in P$ is $E.\Node$.

\subsubsection{Function \ref{func:Sim}}
\label{sec:Sim}

\begin{function}[!t]
  \caption{Sim($Q,\rightarrow,P_{\mathrm{init}}, R_{\mathrm{init}}$)}
  \label{func:Sim}
  \KwData{$RefinerNodes$: the set of nodes $B$ corresponding to blocks of 
    $\mathscr{R}$ such that $B.\NotRel\neq\emptyset$}
  $RefinerNodes:=\emptyset$ \;
  
  $P:=$\texttt{\ref{func:Init}($Q,\rightarrow,P_{\mathrm{init}},R_{\mathrm{init}}, RefinerNodes$)} \;
  \BlankLine
  \While{$RefinerNodes\neq\emptyset$\nllabel{Sim:3}}
  {
    \texttt{\ref{func:SimUpdateData}($P, RefinerNodes$)} \nllabel{Sim:4}\;

    \BlankLine
    \texttt{\ref{func:Split1}($P, RefinerNodes$)} \;
    \texttt{\ref{func:Split2}($P, RefinerNodes$)} \;
    \texttt{\ref{func:Refine}($RefinerNodes$)} \;
  }
  \BlankLine
  $P_{\mathrm{sim}} := P$ ; $R_{\mathrm{sim}}:= \{(C,D)\in P\times P\suchthat D\in C.\Rel\}$ \;
  \Return{$(P_{\mathrm{sim}}$, $R_{\mathrm{sim}})$}
\end{function}

This is the main function of the algorithm. It takes as input a transition
system $(Q,\rightarrow)$ and an initial preorder, $\mathscr{R}_{\mathrm{init}}$,
represented by the partition-relation pair
$(P_{\mathrm{init}},R_{\mathrm{init}})$.  Let us define the two following
relations:

\begin{align}
  \label{eq:defRRel}
  \mathscr{R}&\triangleq\bigcup_{\{(E,E')\in P^2\suchthat E'\in E.\Rel\}}E.\Node\times E'.\Node\\
  \label{eq:defNotRelAlgo}
  \mathit{NotRel}&\triangleq\bigcup_{B\;\in\; RefinerNodes} B\times (\cup B.\NotRel)
\end{align}

Let $\mathscr{R}_0=Q\times Q$ and $\mathscr{R}_i$ (resp. $\mathit{NotRel}_i$) be the value of
$\mathscr{R}$  (resp. $\mathit{NotRel}$) at the $i^{\text{th}}$ iteration of the while loop at line
\ref{Sim:3} of Function \ref{func:Sim}. We will show (in the analysis of Function
\ref{func:Init} for the base case and procedures \ref{func:SimUpdateData} and
\ref{func:Refine} for the inductive step) that, at this line
\ref{Sim:3}, we maintain the five following properties at the $i^{\text{th}}$
iteration of the while loop:
\begin{equation}\label{eq:RelNotRelStability}
 \mathscr{R}_{i}\text{ is } \mathscr{R}_{i-1}\text{-stable}
\end{equation}
\begin{equation}\label{eq:SimIncludedRecc}
 \text{A simulation included in }\mathscr{R}_{i-1}\text{ is included in }\mathscr{R}_{i} 
\end{equation}

\begin{equation} \label{eq:RiDecomposition}
  \mathit{NotRel}_{i}=\mathscr{R}_{i-1}\mathrel\setminus\mathscr{R}_{i} \text{ and thus }
  \mathscr{R}_{i-1} = \mathscr{R}_{i} \cup  \mathit{NotRel}_{i}
\end{equation}
\begin{multline}
  \label{eq:RefinerNodes}
  \hfill\text{$RefinerNodes$ is the set of nodes $B$ corresponding to}\hfill \\
\hfill  \text{blocks of $\mathscr{R}_i$ such that $B.\NotRel\neq\emptyset$}\hfill
\end{multline}
  \begin{multline}\label{eq:SplitUpdateData}
     \hfill  \forall E,B'\in P\;.\; B'.\RelCountAlgo{E} =
      |\{E'\in P\suchthat E.\Rep\rightarrow  \Olics{\hfill\\  \hfill} E'.\Node \wedge
      B'.\Node\times E'.\Node \subseteq \mathrel{\mathscr{R}_{i-1}} \}|\hfill
\end{multline}

From \eqref{eq:RelNotRelStability}, and thus
$\mathscr{R}_{i}\subseteq\mathscr{R}_{i-1}$, \eqref{eq:RiDecomposition},
\eqref{eq:RefinerNodes} and the condition of the while loop we get that
$(\mathscr{R}_i)_{i\geq 0}$ is a strictly decreasing sequence of
relations. Since the underlying set of states is finite, this sequence reaches a
limit in a finite number of iterations. Furthermore, if this limit is reached at
the $k^{\text{th}}$ iteration, then, from \eqref{eq:RefinerNodes} and the
condition of the while loop, we have $\mathit{NotRel}_{k+1}=\emptyset$ and from
\eqref{eq:RiDecomposition} and \eqref{eq:RelNotRelStability} we obtain that
$\mathscr{R}_{k}=\mathscr{R}_{k+1}$ and $\mathscr{R}_{k+1}$ is
$\mathscr{R}_{k}$-stable. Which means that $\mathscr{R}_{k}$ is a
simulation. From \eqref{eq:SimIncludedRecc} and the fact, to be shown in the
analysis of function Init, that all simulation included in
$\mathscr{R}_{\mathrm{init}}$ is also included in $\mathscr{R}_{1}$, we deduce
that $\mathscr{R}_{k}$ contains all simulation included in
$\mathscr{R}_{\mathrm{init}}$. Therefore, Function \ref{func:Sim} returns a
partition-relation pair that corresponds to $\mathscr{R}_{\mathrm{sim}}$, the
coarsest simulation included in $\mathscr{R}_{\mathrm{init}}$.

The fact that $(\mathscr{R}_i)_{i\geq 0}$ is a strictly decreasing sequence of
relations and \eqref{eq:RiDecomposition} imply the following lemma that will be
used as a key argument to analyze the time complexity of the algorithm.
\begin{lemma}
  \label{lem:InOnlyOne}
  Two states, and thus two blocks or two nodes, are related by $\mathit{NotRel}$ in at most
  one iteration of the while loop in function \ref{func:Sim}.
\end{lemma}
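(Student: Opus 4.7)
The plan is to derive the lemma as a direct consequence of the invariants set up in the analysis of Function~\ref{func:Sim}, together with the monotonicity of the chain $(\mathscr{R}_i)_{i\geq 0}$ already established just before the statement. From \eqref{eq:RiDecomposition} one has at once the pointwise characterisation that $(c,d)\in\mathit{NotRel}_i$ if and only if $(c,d)\in\mathscr{R}_{i-1}\setminus\mathscr{R}_i$, and from $\mathscr{R}_{i-1}=\mathscr{R}_i\cup\mathit{NotRel}_i$ the chain is (non-strictly) decreasing: $\mathscr{R}_{i+1}\subseteq\mathscr{R}_i$ for every $i\geq 0$.

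At the state level I would then argue by contradiction. Suppose $(c,d)\in\mathit{NotRel}_i\cap\mathit{NotRel}_j$ for two indices $i<j$. Applied at step $i$, the characterisation above yields $(c,d)\notin\mathscr{R}_i$; applied at step $j$, it yields $(c,d)\in\mathscr{R}_{j-1}$. But $j-1\geq i$, so by monotonicity $\mathscr{R}_{j-1}\subseteq\mathscr{R}_i$, whence $(c,d)\in\mathscr{R}_i$, the desired contradiction. Hence a given pair of states appears in $\mathit{NotRel}_i$ for at most one index $i$.

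The extension to nodes and blocks is immediate from the algorithmic definition \eqref{eq:defNotRelAlgo}: two nodes $B,B'$ are related by $\mathit{NotRel}$ at iteration $i$ exactly when $B'\in B.\NotRel$ at that iteration, equivalently when $B\times B'\subseteq\mathit{NotRel}_i$. Nodes being persistent (they are never modified after creation), any pair of iterations $i<j$ at which $B$ and $B'$ were both related would produce the same witnessing pair of states $(c,d)\in B\times B'$ lying in $\mathit{NotRel}_i\cap\mathit{NotRel}_j$, which the state-level claim forbids. Blocks are handled in the same way via their enclosing nodes. The lemma is thus essentially a one-line set-theoretic consequence of the invariants; I foresee no real obstacle beyond tracing relatedness at the block/node level back down to individual states so that the decreasing chain $(\mathscr{R}_k)_k$ can be invoked.
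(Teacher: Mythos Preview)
Your proposal is correct and follows exactly the approach the paper indicates: the paper states (just before the lemma) that it is implied by $(\mathscr{R}_i)_{i\geq 0}$ being a decreasing sequence together with \eqref{eq:RiDecomposition}, and you have simply unfolded that implication. The extension to nodes and blocks via \eqref{eq:defNotRelAlgo} is also appropriate and matches the paper's intent.
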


\subsubsection{Procedure \ref{func:SimUpdateData}}
\label{sec:SimUpdateData}

\begin{procedure}[!t]
  \caption{SimUpdateData($P, RefinerNodes$)}
  \label{func:SimUpdateData}
  \BlankLine
  $PreE' := \emptyset$ \;
  
  \ForEach{$B\in RefinerNodes$}
  {
    $B'=\ChooseBlock{B}$ \;
    \tcp{\small At this stage, $B'$ is the only block of $B$}
    \ForEach{$E'\in P \suchthat E'.\Node\subseteq\cup B.\NotRel$}
    {
      \ForEach{$e\in\rightarrow^{-1}(E'.\Node)$\nllabel{SimUpdateData:5}}
      {
        $E := e.\Block$ \;
        \If {$e=E.\Rep$}
        {
          $PreE' := PreE' \cup \{E\}$ \;          
        }
      }
      \lForEach{$E \in PreE'$}
      {
        $B'.\RelCountAlgo(E)--$
      }
      $PreE' := \emptyset$ \;      
    }
  }
\end{procedure}

Assuming \eqref{eq:SplitUpdateData} is true, the role of this procedure is to
render the following formula true after line~\ref{Sim:4} of Function
\ref{func:Sim} during the $i^{\text{th}}$ iteration of the while loop.  In this way,
the counters are made consistent with~\eqref{eq:RelCount}:
  \begin{multline}\label{eq:SplitUpdateData_i}
    \OarXiv{\hfill}   \forall E,B'\in P\;.\; B'.\RelCountAlgo{E} = 
      |\{E'\in P\suchthat E.\Rep\rightarrow  \Olics{\\} E'.\Node\wedge
      B'.\Node\times E'.\Node\subseteq\mathrel{\mathscr{R}_i} \}| \OarXiv{\hfill}
\end{multline}

From \eqref{eq:defNotRelAlgo}, \eqref{eq:RiDecomposition},
\eqref{eq:RefinerNodes} and \eqref{eq:SplitUpdateData} we just have, for each
node $B$ in $RefinerNodes$, to scan the blocks in $\cup B.\NotRel$ in order to
identify their predecessor blocks. The corresponding counters are then
decreased.  The lines which are the most executed are those in the loop starting
at line~\ref{SimUpdateData:5}. They are executed once for each pair of
$(e\rightarrow e',B)$ with $e'\in \cup B.\NotRel$. But from Lemma~\ref{lem:InOnlyOne}
such a pair can be considered only once during the life time of the algorithm
and thus the overall time complexity of this procedure is in
$O(|P_{\mathrm{sim}}|.|\rightarrow|)$.

\subsubsection{Procedure \ref{func:Split}}
\label{sec:Split}

\begin{procedure}[!t]
  \caption{Split($P, Marked$)}
  \label{func:Split}
  \BlankLine
  $Touched := \emptyset$ \;
  
  \lForEach{$r\in Marked$} { $Touched := Touched \cup \{r.\Block \}$ }
  
  \ForEach{$C\in Touched \suchthat C.\Node\nsubseteq Marked$}
  {
    $D:=\NewBlock$ ; $P := P \cup \{D\}$  \;
    $D.\Node := C.\Node \cap Marked$ \; 
    \lForEach{$q\in D.\Node$} {$q.\Block := D$ }
    \BlankLine
    \tcp{\small The subblock which is disjoint from $Marked$ keeps the identity of  $C$.
    }
    $C.\Node:= C.\Node  \mathrel\setminus D.\Node$ \;      
     
    \BlankLine
    \texttt{\ref{func:SplitUpdateData}($P,C,D$)} \;
  }
  $Touched := \emptyset$ \;
\end{procedure}

This procedure corresponds to Definition~\ref{def:split}. The differences are
that Procedure~\ref{func:Split} transforms the current partition (it is not a
function) and each time a block is split,  Procedure
\ref{func:SplitUpdateData} is called to update the data structures (mainly \Rel
and \RelCountAlgo).  Apart from the call of \ref{func:SplitUpdateData}, which we
discuss right after, it is known that, with a correct implementation found in
most articles from the bibliography of the present paper, a call of
\ref{func:Split} is done in $O(|Marked|)$-time. Therefore, we only give a high
level presentation of the procedure.

\subsubsection{Procedure \ref{func:SplitUpdateData}}
\label{sec:func:SplitUpdateData}

\begin{procedure}[!t]
  \caption{SplitUpdateData($P,C,D$)}
  \label{func:SplitUpdateData}
  \KwData{$C$ keeps the identity of the parent block ; $D$ is the new block}
  \BlankLine
  $D.\Count := 0$ \;
  \BlankLine
  \tcp{\small Update of the \Rel's}
  $D.\Rel := \Copy{C.\Rel}$\nllabel{SplitUpdateData:2} \;
  \lForEach{$E\in P\suchthat C\in E.\Rel$\nllabel{SplitUpdateData:3}}
  {
    $E.\Rel := E.\Rel \cup \{D\}$ 
  }
  
  \BlankLine
  \tcp{\small Update of the \RelCountAlgo's}
  \BlankLine
  $D.\Rep := C.\Rep$\nllabel{SplitUpdateData:4} \;
  $D.\RelCountAlgo := \Copy{C.\RelCountAlgo}$\nllabel{SplitUpdateData:5} \;

   \If{$C.\Rep.\Block = C$\nllabel{SplitUpdateData:6}}
    {
      $X := D$ \;
    }
    \Else
    {
      \ForEach{$B'\in P$}{$B'.\RelCountAlgo{D} := B'.\RelCountAlgo{C}$ \;}
      $X := C$ \;
    }
  
  \BlankLine
  \tcp{\small Update of the \RelCountAlgo's from predecessors of both $C$ and $D$}
  $Touched := \emptyset$ ;   $Touched' := \emptyset$\nllabel{SplitUpdateData:12} \;
  \ForEach{$e\in \rightarrow^{-1}(C.\Node)\suchthat e = e.\Block.\Rep$}
  {$Touched := Touched \cup \{e.\Block\}$ ;}
  
  \ForEach{$e\in \rightarrow^{-1}(D.\Node)\suchthat {e = e.\Block.\Rep}\wedge
    {e.\Block\in Touched} \wedge e.\Block\not\in Touched'$\nllabel{SplitUpdateData:15}}
  {$Touched' := Touched' \cup \{e.\Block\}$ \nllabel{SplitUpdateData:16};} 
  
  \ForEach{$E\in Touched'$\nllabel{SplitUpdateData:17}}
  {
    \ForEach{$B'\in P\suchthat C\in B'.\Rel$\nllabel{SplitUpdateData:18}}{$B'.\RelCountAlgo{E}++ $\nllabel{SplitUpdateData:19} \;}
  }
  
  $Touched := \emptyset$ ;   $Touched' :=
  \emptyset$\nllabel{SplitUpdateData:20} \; 

  \BlankLine
  \tcp{\small Compute the \RelCountAlgo's from $X$ ($C$ or $D$) which did not
    inherited the old $C.\Rep$.}
 
  $X.\Rep := \ChooseState{X}$\nllabel{SplitUpdateData:21} \;
  \lForEach{$B'\in P$} {$B'.\RelCountAlgo{X} := 0$ }
  \ForEach{$ q \rightarrow q' \suchthat q = X.\Rep$} 
  {$Touched := Touched \cup\{q'.\Block\}$ \;}
  
  \ForEach{$E'\in Touched$\nllabel{SplitUpdateData:25}}
  {
    \ForEach{$B'\in P\suchthat E'\in B'.\Rel$}{$B'.\RelCountAlgo{X}++$ ;\nllabel{SplitUpdateData:27}}
  }
  $Touched := \emptyset$\nllabel{SplitUpdateData:28} \;
  $C.\Node.\NotRel:=\emptyset$ ;  $C.\Node.\NotRel':=\emptyset$ \;
  $D.\Node.\NotRel:=\emptyset$ ;  $D.\Node.\NotRel':=\emptyset$ \; 
\end{procedure}

The purpose of this procedure is to maintain the data structures coherent after
a split of a block $C$ in two subblocks, the new $C$ and a new $D$.
Since this procedure only modifies the $\Rel$'s and $\NotRel$'s, we will only look at
\eqref{eq:defRRel} and \eqref{eq:SplitUpdateData_i}.

The  \Rel's are updated at lines \ref{SplitUpdateData:2} and
\ref{SplitUpdateData:3}. Let $\mathscr{R}'$ be the value of $\mathscr{R}$ before
the split, let $\mathscr{R}''$ be its value after the split and let $N$ be the
node associated with the block $C$ before it was split. Before the split, we
have $N=C.\Node$ and after the split we have $N=C.\Node\cup D.\Node$. With line
\ref{SplitUpdateData:2}, we have, for each block $E\in P$, the equivalence $
N\times E.\Node \subseteq \mathscr{R}'\Leftrightarrow (C.\Node\cup
D.\Node)\times E.\Node\subseteq \mathscr{R}''$ and with line
\ref{SplitUpdateData:3} we have the equivalence $E.\Node\times N\subseteq
\mathscr{R}'\Leftrightarrow E.\Node\times (C.\Node\cup D.\Node)\subseteq
\mathscr{R}''$. And thus $\mathscr{R}$ has not changed since the other products
$E.\Node\times E'.\Node$ in its definition have not changed. For one call of
\ref{func:SplitUpdateData} these two lines are executed in $O(|P|)$-time. Since
\ref{func:SplitUpdateData} is called only when a block is split and since there
is, at the end, at most $|P_{\mathrm{sim}}|$ blocks. The overall time complexity
of these two lines is in $O(|P_{\mathrm{sim}}|^2)$.

The \RelCountAlgo's are updated in the other lines of the procedure. In
\eqref{eq:SplitUpdateData_i}, the split can involve three blocks: $B'$, $E$ or
$E'$. Let us remember that $\mathscr{R}$ is not changed during this procedure.

Line \ref{SplitUpdateData:5} treats the case where the split block is a
$B'$. Since $\mathscr{R}$ is a preorder, after lines \ref{SplitUpdateData:2} and
\ref{SplitUpdateData:3}, we have $C\in D.\Rel$ and $D\in C.\Rel$. It is therefore
normal that for any $E\in P$ we have $D.\RelCountAlgo{E} = C.\RelCountAlgo{E}$
since $\mathscr{R}$ is a preorder.
The overall time complexity of line \ref{SplitUpdateData:5} is thus in
$O(|P_{\mathrm{sim}}|^2)$.

If in \eqref{eq:SplitUpdateData_i} the split block is $E$, the test at line
\ref{SplitUpdateData:6} determines the block $X$ among the new $C$ and $D$ that
did not inherit $E.\Rep$ (which is $C.\Rep$ at this line since $C$ keeps the
identity of the parent block, the old $C$ and thus $E$). This means that we
will have to initialise $B'.\RelCountAlgo{X}$ for all $B'\in P$. This is done
after at lines \ref{SplitUpdateData:21} to \ref{SplitUpdateData:27}. For now,
 if $D$ has inherited $E.\Rep$ we do  $B'.\RelCountAlgo{D} :=
 B'.\RelCountAlgo{E}$ for all $B'\in P$. Remember, at this stage, we have $B'.\RelCountAlgo{E} =
 B'.\RelCountAlgo{C}$. 

 Lines \ref{SplitUpdateData:12} to \ref{SplitUpdateData:20} treat the case where
 the split block is $E'$. We thus have $E'.\Node=C.\Node\cup D.\Node$. There is
 three alternatives for a given block $E$: either $E.\Node\rightarrow C.\Node$,
 or $E.\Node\rightarrow D.\Node$, or both. For the two first alternatives
 $B'.\RelCountAlgo{E}$ does not change. But for the third one, we have to
 increment this count by one. Apart for lines \ref{SplitUpdateData:17} to
 \ref{SplitUpdateData:19} the overall time complexity of these lines is in
 $O(|P_{\mathrm{sim}}|.|\rightarrow|)$. Remember, \ref{func:SplitUpdateData} is
 called only when a block is split and this occurs at most $|P_{\mathrm{sim}}|$
 times. To correctly analyze the time complexity of lines
 \ref{SplitUpdateData:17} to \ref{SplitUpdateData:19}, for a state $e$ let us
 first define $\rightarrow(e)_P\triangleq\{E'\in P\suchthat e\rightarrow
 E'\}$. This set $\rightarrow(e)_P$ is a partition of $\rightarrow(e)$, the set
 of the successors of $e$. Then, each time a state $e$ is involved at
 line~\ref{SplitUpdateData:16} this means that a block $E'$ in
 $\rightarrow(e)_P$ has been split in $C$ and in $D$. For a given state $e$
 there can be at most $|\rightarrow(e)|$ splits of $\rightarrow(e)_P$. Hence,
 the sum of the sizes of $Touched'$ for all executions of
 \ref{func:SplitUpdateData} is in $O(|\rightarrow|)$.  Furthermore, for one
 execution of this procedure, the time complexity of lines
 \ref{SplitUpdateData:18}--\ref{SplitUpdateData:19} is in
 $O(|P_{\mathrm{sim}}|)$.  Therefore, the overall time complexity of lines
 \ref{SplitUpdateData:17} to \ref{SplitUpdateData:19} is in
 $O(|P_{\mathrm{sim}}|.|\rightarrow|)$.

Lines \ref{SplitUpdateData:21} to \ref{SplitUpdateData:28} treat the case where
the split block is $E$. The block $E$ has been split in $C$ and $D$. One of them
contains $E.rep$. It is therefore not necessary to recompute the counters
associated with it (although these counters have been possibly updated at lines
\ref{SplitUpdateData:12} to \ref{SplitUpdateData:20}). The variable $X$
represents the block, $C$ or $D$, which has not inherited $E.\Rep$. Therefore, we have to
initialize the counters associated with it. Note that lines
\ref{SplitUpdateData:21} to \ref{SplitUpdateData:28} are executed at most once
for each block. Therefore, apart from the nested loops at lines
\ref{SplitUpdateData:25}--\ref{SplitUpdateData:27}, the overall time complexity
of them is in $O(|P_{\mathrm{sim}}|.|\rightarrow|)$.  For the nested loops, we
have to observe that the size of $Touched$ is less than the number of outgoing
transitions from $X.\Rep$ and those transitions are considered only once during
the execution of the algorithm. Therefore the overall time complexity of the
nested loops is also in $O(|P_{\mathrm{sim}}|.|\rightarrow|)$.

All of this implies that the  overall time complexity of Procedure
\ref{func:SplitUpdateData} is in $O(|P_{\mathrm{sim}}|.|\rightarrow|)$.

\subsubsection{Function \ref{func:Init}}
\label{sec:Init}

\begin{function}[!t]
  \caption{Init($Q,\rightarrow,P_{\mathrm{init}},R_{\mathrm{init}}, RefinerNodes$)}
  \label{func:Init}
  $P := \Copy(P_{\mathrm{init}})$ \nllabel{Init:1}\;
  \ForEach{$E\in P$}
  {
    $E.\Count := 0$ ; $E.\Rep := \ChooseState{E}$ \;
    $E.\Node := \{q\in Q\suchthat q.\Block = E\}$ \;
    $E.\Rel := \{E' \in P \suchthat (E,E') \in R_{\mathrm{init}}\}$ \nllabel{Init:5}\;
  }

  \BlankLine 
  \tcp{\small Initialization to take into account
    Proposition~\ref{prop:InitRefine}.} 
 
  $Marked := \emptyset$ ; $Touched := \emptyset$ \nllabel{Init:6}\;
  \lForEach{$q\rightarrow q'$}
  {
    $Marked := Marked \cup\{q\}$ }
                        
  \texttt{\ref{func:Split}$(P,Marked)$}\nllabel{Init:8} \;
  \ForEach{$E\in P\suchthat E.\Rep\in Marked$}
    {$Touched := Touched \cup\{E\}$ ;}

  \ForEach{$C\in Touched$}
  {
    \lForEach{$D\not\in Touched$}
    { $C.\Rel := C.\Rel \mathrel\setminus\{D\}$ }
  }
  $Marked := \emptyset$ ; $Touched := \emptyset$\nllabel{Init:13} \;
 
  \BlankLine
  \tcp{\small Initialization of $RefinerNodes$ and the $\NotRel$ 's.}
  \ForEach{$C\in P$\nllabel{Init:14}}
  {
    $C.\Node.\NotRel' :=\emptyset$ \;
    $C.\Node.\NotRel := \{D.node \suchthat D \not\in C.\Rel\}$ \;
    \If{$C.\Node.\NotRel \neq\emptyset$}
    {$RefinerNodes:=RefinerNodes\cup\{C.\Node\}$ \;}
  }
    
  \BlankLine
  \tcp{\small Initialization of  the \RelCountAlgo's.}
  \lForEach{$E,B'\in P$}  {$B'.\RelCountAlgo{E} := 0$ \nllabel{Init:19}}
  $PreE' := \emptyset$ \;
  \ForEach{$E'\in P$}
  {
    \ForEach{$e\in\rightarrow^{-1}(E'.\Node)\suchthat e=e.\Block.\Rep$\nllabel{Init:22}}
    {$PreE' := PreE'\cup \{e.\Block\}$ ;}
    
    \ForEach{$E\in PreE', B'\in P$\nllabel{Init:24}}
    {
      $B'.\RelCountAlgo{E}++$ ;
    }
    $PreE' := \emptyset$\nllabel{Init:26} \;
  }
  
  \Return{$(P)$} \;
\end{function}

This function initializes the data structures and transforms the initial preorder
such that we start from a preorder stable with the total relation
$\mathscr{R}_0=Q\times Q$.  The first lines require no special comment except
that the $\Rel$ array for each block $E$ is initialized according to
\eqref{eq:defRRel}. The time complexity of these lines
\ref{Init:1}--\ref{Init:5} is in  $O(|P_{\mathrm{sim}}|^2)$.

Lines \ref{Init:6}--\ref{Init:13} transform the initial preorder according to
Proposition~\ref{prop:InitRefine}. Note that the call of Function
\ref{func:Split} at line~\ref{Init:8} has the side effect to transform the
counters before their initialization. This is not a problem since this does not
change the overall time complexity and the real initialization of the counters
is done after, at lines \ref{Init:19}--\ref{Init:26}. Just after
line~\ref{Init:13}, with Proposition~\ref{prop:InitRefine}, the invariants
\eqref{eq:RelNotRelStability} is true for $i=1$ and each simulation included in
the preorder represented by the partition-relation pair
$(P_{\mathrm{init}},R_{\mathrm{init}})$ is included in $\mathscr{R}_1$.  Apart
from the inner call of Procedure \ref{func:SplitUpdateData}, whose we know the
overall time complexity, $O(|P_{\mathrm{sim}}|.|\rightarrow|)$, the time
complexity of these lines is in $O(|\rightarrow| + |P_{\mathrm{sim}}|^2)$.

The loop starting at line \ref{Init:14} initializes the \NotRel's according to
\eqref{eq:defNotRelAlgo} such that \eqref{eq:RiDecomposition} is also true for
$i=1$.  The set $RefinerNodes$ is also initialized according to
\eqref{eq:RefinerNodes} for $i=1$. The time complexity of this loop is in
$O(|P_{\mathrm{sim}}|^2)$.

Lines \ref{Init:19}--\ref{Init:26} initialize the counters such that for all
$E,B'\in P$ we have:
\begin{displaymath}
B'.\RelCountAlgo{E} = |\{E'\in P\suchthat E.\Rep\rightarrow
E'.\Node\}|
\end{displaymath}
The invariant \eqref{eq:SplitUpdateData} is thus true for $i=1$ since
$B'.\Node\times E'.\Node \subseteq \mathscr{R}_0$ is always true, with $\mathscr{R}_0=Q\times Q$.
The time complexity of the loop at line \ref{Init:19} is in
$O(|P_{\mathrm{sim}}|^2)$. The time complexity of the loop at line \ref{Init:22} is in
$O(|P_{\mathrm{sim}}|.|\rightarrow|)$. An iteration of the loop at line \ref{Init:24}
corresponds to a meta-transition $E.\Node \rightarrow E'.\Node$ and a block
$B'\in P$. Its time complexity is therefore in $O(|P_{\mathrm{sim}}|.|\rightarrow|)$.

\subsubsection{Procedure \ref{func:Split1}}
\label{sec:Split1}

\begin{procedure}[!t]
  \SetAlgoVlined
  \caption{Split1($P, RefinerNodes$)}
  \label{func:Split1}

  $Marked := \emptyset$ ; atLeastOneSplit := \False \ \; 
  \ForEach{$B\in RefinerNodes $}
  {
    $B'=\ChooseBlock(B)$ \nllabel{Split1:3}\;
    \ForEach{$e\in\rightarrow^{-1}(B)$ \nllabel{Split1:4}}
    {
      \If {$B'.\RelCountAlgo(e.\Block)=0$}
      {
        $atLeastOneSplit := \True$ \;
      }
    }
    \If {$atLeastOneSplit = \True$}
    {
      \ForEach{$q\rightarrow q' \suchthat q'.\Block \in B'.\Rel$\nllabel{Split1:8}}
        {
          $Marked := Marked\cup \{q\}$ \;
        }
      \texttt{\ref{func:Split}($P,Marked$)} \;
      $Marked := \emptyset$ ; atLeastOneSplit := \False  \nllabel{Split1:11}\;
    }      
  }
\end{procedure}

The purpose of this procedure is to apply Lemma~\ref{lem:split1} to split the
current partition $P$ until there is no $(\mathscr{P},\mathscr{R})$-splitter
transition of type 1, with $\mathscr{P}=\mathscr{P}_P$ and $\mathscr{R}$ defined
by \eqref{eq:defRRel}. This is done such that the coarsest $\mathscr{R}$-block-stable
equivalence relation included in $\mathscr{P}_P$ before the execution of
\ref{func:Split1} is still included in $\mathscr{P}_P$ after its
execution. Proposition~\ref{prop:notRel} guarantees that all splitter
transitions of type 1 have been treated since we assume
\eqref{eq:RefinerNodes}. Note that at line \ref{Split1:3}, $B$ is a node, a
set of states, which corresponds to a block of the relation $\mathscr{R}$. As
explained in section \ref{sec:data-structures}, the counters are defined between
two blocks of the current partition $P$. We thus need to choose one of this block
included in $B$ to represent $B$ since we have Proposition~\ref{prop:RelCount}.

A transition $e\rightarrow e'$ with $e'\in B$ at line \ref{Split1:4} is
considered only if there is a block in $\cup B.\NotRel$. From Lemma~\ref{lem:InOnlyOne}
this can happen only $|P_{\mathrm{sim}}|$ times. Therefore, the overall time complexity
of the loop at line \ref{Split1:4} is in $O(|P_{\mathrm{sim}}|.|\rightarrow|)$.  From
Lemma~\ref{lem:split1}, lines \ref{Split1:8}--\ref{Split1:11} are executed only
when at least one block is split. Therefore, their overall time complexity is in
$O(|P_{\mathrm{sim}}|.|\rightarrow|)$.

\subsubsection{Procedure \ref{func:Split2}}
\label{sec:Split2}

\begin{procedure}[!t]
  \caption{Split2($P, RefinerNodes$)}
  \label{func:Split2}
    
  $PreB' := \emptyset$ ; $Touched := \emptyset$ ; $Touched' := \emptyset$ ;
  $Marked := \emptyset$ \;

  \ForEach{$B\in RefinerNodes $}
  {
    \ForEach{$B' \in P \suchthat B'.\Node\subseteq B$\nllabel{Split2:3}}
    {
      \ForEach{$e\in \rightarrow^{-1}(B'.\Node)$\nllabel{Split2:4}}
      {
        $E := e.\Block$ \;
        \If {$e=E.\Rep$}
        {
          $PreB' := PreB' \cup \{E\}$ \;          
        }
      }
      \ForEach{$E \in PreB'$\nllabel{Split2:8}}
      {
        $E.\Count$++ \;
        $Touched := Touched \cup \{E\}$ \;                
      }
      $PreB' := \emptyset$ \;
      
    }
    \BlankLine
    $B'=\ChooseBlock(B)$ \;
    \ForEach{$E\in Touched$\nllabel{Split2:13}}
    {
      \If{$B'.\RelCountAlgo(E) = E.\Count$}
      {
        \tcp{\small The transition $E.\Rep \rightarrow B$ is maximal}
        $Touched' := Touched' \cup \{E\}$ \;
      }
      $E.\Count \mathrel{:=} 0$  \;
    }
     
    \ForEach{$e\in \rightarrow^{-1}(B) \suchthat e.\Block \in Touched'$\nllabel{Split2:17}}
    {
      $Marked := Marked\cup \{e\}$ \;
    }
    
    \texttt{\ref{func:Split}($P,Marked$)} \nllabel{Split2:19}\;

      $Touched := \emptyset$ ;  $Touched' := \emptyset$ ; $Marked := \emptyset$ \;
  }
\end{procedure}

This procedure is applied after \ref{func:Split1}. We can therefore assume that
there is no more $(\mathscr{P},\mathscr{R})$-splitter transition of type 1, with
$\mathscr{P}$ and $\mathscr{R}$ defined in the analysis of Split1. The aim of
this procedure is to implement Lemma~\ref{lem:split2}. It works as follows. For
a given node $B$ which corresponds to a block of $\mathscr{R}$ such that
$B.\NotRel\neq\emptyset$ we scan the blocks $B'$ of $P$ which are included in
$B$. For each of those $B'$ we scan, loop at line \ref{Split2:4}, the incoming
transitions from representatives states of blocks $E$. For each of these blocks
we increment $E.\Count$, loop at line \ref{Split2:8}. Therefore, at the end of
the loop at line \ref{Split2:3}, for each block $E\in P$ such that $E\rightarrow
B$, we have $E.\Count = |\{[b]_{\mathscr{P}}\subseteq B\suchthat
E.\Rep\rightarrow b\}|$. This allows us to identify, loop at
line~\ref{Split2:13}, the blocks that may be split according to
Lemma~\ref{lem:split2}. The split is done thanks to
lines\ref{Split2:17}--\ref{Split2:19}.  For reasons similar to those for
Procedure \ref{func:Split1}, we have: the overall time complexity of Procedure
\ref{func:Split2} is in $O(|P_{\mathrm{sim}}|.{|\rightarrow|})$.

\subsubsection{Procedure \ref{func:Refine}}
\label{sec:Refine}

\begin{procedure}[!t]
  \caption{Refine($RefinerNodes$)}
  \label{func:Refine}
  
  $Remove := \emptyset$  ; $RefinerNodes':=\emptyset$ \;
  \ForEach{$B\in RefinerNodes $\nllabel{Refine:2}}
  {
    $B'=\ChooseBlock(B)$ \;
   \ForEach{$d\in \rightarrow^{-1}( \cup B.\NotRel)$\nllabel{Refine:4}}
   {
     $D := d.\Block$ \;
      \If {$B'.\RelCountAlgo{D} = 0$\nllabel{Refine:6}}
      {
        $Remove := Remove \cup \{D\}$ \;
      }
    }
    \ForEach{$c\in \rightarrow^{-1}(B)$\nllabel{Refine:8}}
    {
      $C := c.\Block$ \;
     \ForEach{$D\in Remove \suchthat D\in C.\Rel$\nllabel{Refine:10}}
      {
        $C.\Rel := C.\Rel \mathrel\setminus \{D\}$ \;
        $C.\Node.\NotRel' := C.\Node.\NotRel' \cup \{D.\Node\}$ \;
        $RefinerNodes' := RefinerNodes' \cup \{C.\Node\}$ \;
      }        
    }
    $Remove := \emptyset$ ; $B.\NotRel := \emptyset$ \;
  }

  $RefinerNodes:=\emptyset$ \nllabel{Refine:15}\;
  \ForEach{$B\in RefinerNodes'$}
  {
    $\Swap(B.\NotRel, B.\NotRel')$ \;
  }
  $\Swap(RefinerNodes, RefinerNodes')$ \nllabel{Refine:18}\;
\end{procedure}
Procedures \ref{func:Split1} and \ref{func:Split2} possibly change the current
partition $P$, but not $\mathscr{R}$, and preserve \eqref{eq:defNotRelAlgo},
\eqref{eq:RelNotRelStability}, \eqref{eq:SimIncludedRecc},
\eqref{eq:RiDecomposition}, \eqref{eq:RefinerNodes} and
\eqref{eq:SplitUpdateData_i}.  Thanks to Lemma~\ref{lem:split1} and
Lemma~\ref{lem:split2} all $\mathscr{R}$-block-stable equivalence relation
presents in $\mathscr{P}_P$ before the execution of \ref{func:Split1} and
\ref{func:Split2} are still presents after. Furthermore, with
Theorem~\ref{th:split1_2}, we know that $\mathscr{P}_P$, after the execution of
\ref{func:Split2}, is $\mathscr{R}$-block-stable. From all of this, before
the execution of \ref{func:Refine},
$\mathscr{P}_P$ is the coarsest $\mathscr{R}$-block-stable equivalence relation. The conditions are thus met to apply
Theorem~\ref{th:refinement} to do a refine step of the algorithm.  Note that,
thanks to \eqref{eq:SplitUpdateData_i} and Proposition~\ref{prop:RelCount}, we
have the equivalence: $d\not\in \rightarrow^{-1}\mathrel\circ\mathscr{R}(B)
\Leftrightarrow B'.\RelCountAlgo{D} = 0$ at line \ref{Refine:6}.  At the end of
the while loop at line \ref{Refine:2}, the relation $\mathscr{R}$ has been
refined by $\mathit{NotRel}'$. In lines \ref{Refine:15}--\ref{Refine:18} $\mathit{NotRel}$ is set
to $\mathit{NotRel}'$ and $RefinerNodes$ is set to $RefinerNodes'$ to prepare the next
iteration of the while loop in Procedure~\ref{func:Sim}. By construction, see
the loop starting at line \ref{Refine:10}, \eqref{eq:RiDecomposition} and
\eqref{eq:RefinerNodes} are set for the next iteration of the while loop in
\ref{func:Sim}. In a similar way, the \RelCount's are not modified by
\ref{func:Refine}, \eqref{eq:SplitUpdateData_i} is thus still true and
\eqref{eq:SplitUpdateData} will be true for the next iteration of the while loop
in \ref{func:Sim}. Thanks to Theorem~\ref{th:refinement},
\eqref{eq:RelNotRelStability} and \eqref{eq:SimIncludedRecc} are also preserved.

From Lemma~\ref{lem:InOnlyOne}, a node $B$ and a transition $d\rightarrow d'$
with $d'\in \cup B.\NotRel$ are considered at most once during the execution of
the algorithm. Therefore, the overall time complexity of the loop at line
\ref{Refine:4} is in $O(|P_{\mathrm{sim}}|.|\rightarrow|)$.  Let us now consider
a block $D$ in $Remove$ and a transition $c\rightarrow b$ with $b\in B$. By
contradiction, let us suppose this pair $(D,c\rightarrow b)$ can happen twice,
at iteration $i$ and at iteration $j$ of the while loop of Function
\ref{func:Sim}, with $i<j$. From \eqref{eq:RiDecomposition} and line
\ref{Refine:4} we have for $k=i$ and $k=j$: $ D\rightarrow \mathscr{R}_{k-1}(b)$
and $D\not\rightarrow \mathscr{R}_{k}(b)$. But this is not possible since
$(\mathscr{R}_i)_{i\geq 0}$ is a strictly decreasing sequence of relations. This
means that the overall time complexity of the loop at line \ref{Refine:8} is
also in $O(|P_{\mathrm{sim}}|.|\rightarrow|)$. The other lines have a lower
overall time complexity. From all of this, the overall time complexity of this
procedure is in $O(|P_{\mathrm{sim}}|.|\rightarrow|)$.

\subsubsection{Time Complexity of the Algorithm}
From the analysis of the functions and procedures of the algorithm, we derive
the following theorem.
\begin{theorem}
  The time complexity of the presented simulation algorithm is in
  $O(|P_{\mathrm{sim}}|.|\rightarrow|)$. 
\end{theorem}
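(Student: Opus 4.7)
The plan is to derive the bound by summing, over the whole execution, the overall time complexities that the preceding subsections attach to each procedure. Function \ref{func:Sim} performs a single call to \ref{func:Init} and then iterates a while loop whose body consists of calls to \ref{func:SimUpdateData}, \ref{func:Split1}, \ref{func:Split2} and \ref{func:Refine}; \ref{func:Split} and \ref{func:SplitUpdateData} are invoked indirectly. All six subsections have already stated an \emph{overall} (amortised across the whole execution) complexity in $O(|P_{\mathrm{sim}}|.|{\rightarrow}|)$, together with the fact that \ref{func:Init} uses an extra additive $O(|P_{\mathrm{sim}}|^2 + |{\rightarrow}|)$ which is dominated by $|P_{\mathrm{sim}}|.|{\rightarrow}|$. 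So the proof reduces to a clean aggregation.

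First I would justify that it is sound to speak of an overall complexity across the whole while loop: by invariant \eqref{eq:RelNotRelStability}, \eqref{eq:RiDecomposition} and the nonemptiness of $RefinerNodes$ as the loop-entry condition, the sequence $(\mathscr{R}_i)_{i\geq 0}$ is strictly decreasing, so the loop terminates after at most $|Q|^2$ iterations and in particular is finite. This legitimises the amortisation and lets me invoke Lemma~\ref{lem:InOnlyOne} as the common key fact: any ordered pair of nodes enters $\mathit{NotRel}$ at most once, so every procedure whose work was charged to such pairs (every scan of $\rightarrow^{-1}(\cup B.\NotRel)$ in \ref{func:SimUpdateData}, \ref{func:Split1}, \ref{func:Split2} and \ref{func:Refine}) sees each transition at most $|P_{\mathrm{sim}}|$ times across the whole execution.

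Second I would enumerate the six contributions. \ref{func:Init} costs $O(|P_{\mathrm{sim}}|.|{\rightarrow}|)$ once. \ref{func:SimUpdateData} costs $O(|P_{\mathrm{sim}}|.|{\rightarrow}|)$ overall by Lemma~\ref{lem:InOnlyOne}. \ref{func:Split1} and \ref{func:Split2} each cost $O(|P_{\mathrm{sim}}|.|{\rightarrow}|)$ overall, the per-iteration split body being charged by Lemma~\ref{lem:split1} and Lemma~\ref{lem:split2} against the creation of a new block (which happens at most $|P_{\mathrm{sim}}|$ times). \ref{func:Refine} costs $O(|P_{\mathrm{sim}}|.|{\rightarrow}|)$ overall, again by Lemma~\ref{lem:InOnlyOne} combined with the strict decrease of $(\mathscr{R}_i)$. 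Finally, \ref{func:Split} is $O(|Marked|)$ per invocation and its internal call to \ref{func:SplitUpdateData} has overall cost $O(|P_{\mathrm{sim}}|.|{\rightarrow}|)$ (the dominant term being lines \ref{SplitUpdateData:17}--\ref{SplitUpdateData:19} and \ref{SplitUpdateData:25}--\ref{SplitUpdateData:27}, each charged to a transition out of a new representative, which together is $\sum_{\text{new }X}|\rightarrow(X.\Rep)|\cdot|P_{\mathrm{sim}}| = O(|P_{\mathrm{sim}}|.|{\rightarrow}|)$). Adding these six $O(|P_{\mathrm{sim}}|.|{\rightarrow}|)$ contributions yields the claimed bound.

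The only genuine obstacle I foresee is making explicit that the per-iteration running times of the body of the while loop in \ref{func:Sim} are \emph{not} bounded by $O(|P_{\mathrm{sim}}|.|{\rightarrow}|)$ each, but that their sum is; a one-line remark that the subsection analyses are stated as overall bounds (and that their amortisation charges precisely the pairs $(B, e\rightarrow e')$ controlled by Lemma~\ref{lem:InOnlyOne}, or the creation of a new block, of which there are $O(|P_{\mathrm{sim}}|)$) suffices to close the gap without replaying the accounting.
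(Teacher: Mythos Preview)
Your proposal is correct and follows the same approach as the paper: the paper's own proof is a single sentence stating that the bound follows ``from the analysis of the functions and procedures of the algorithm,'' i.e., by aggregating the overall $O(|P_{\mathrm{sim}}|.|{\rightarrow}|)$ complexities already established in the preceding subsections for \ref{func:Init}, \ref{func:SimUpdateData}, \ref{func:Split}, \ref{func:SplitUpdateData}, \ref{func:Split1}, \ref{func:Split2} and \ref{func:Refine}. Your write-up is in fact more explicit than the paper's, spelling out the amortisation via Lemma~\ref{lem:InOnlyOne} and the bound on the number of splits, but the underlying argument is identical.
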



\section{Improvements and Future Work}
\label{sec:future}
With the new notions of maximal transition, stable preorder, block-stable
equivalence relation and representative state, we have introduced new
foundations that we will use to design some efficient simulation
algorithms. This formalism has been illustrated with the presentation of the
most efficient in memory of the fastest simulation algorithms of the moment.

It is possible to increase in practice the time efficiency of procedures
\ref{func:SplitUpdateData}, \ref{func:Split1} and \ref{func:Split2} if we allow the use of both
$\rightarrow$ and $\rightarrow^{-1}$ (or if we calculate one from the other,
which requires an additional bit space in $O({|\rightarrow|.}\log(|Q|))$ to store
the result). Procedures \ref{func:SplitUpdateData} and \ref{func:SimUpdateData}
can be further improved in practice, but this changes $O(|P_{\mathrm{sim}}|^2.\log(|P_{\mathrm{sim}}|))$ to
$O(|P_{\mathrm{sim}}|^2.\log(|Q|))$, which is not really noticeable in practice,
in the bit space complexity, if we count states 
instead of blocks in \eqref{eq:RelCount}, \eqref{eq:SplitUpdateData} and
\eqref{eq:SplitUpdateData_i}.

Simulation algorithms are generally extended for labeled transition systems
(LTS) by embedding them in normal transition systems. This is what is proposed
in \cite{RT10,GPP03} and \cite{GPP15} for example. By doing this, the size of
the alphabet is introduced in both time and space complexities. Even in
\cite{ABHKV08a}, where a more specific algorithm is proposed for LTS, the size of
the alphabet still matter. In \cite{Cec13a} we proposed three extensions of
\cite{RT10} for LTS with significant reduction of the incidence of the size of
the alphabet. We will therefore propose the same extensions but from the
foundations given in the present paper. Note that the bit space complexity of
the algorithm presented here is in fact in
$\Theta(|P_{\mathrm{sim}}|^2.\log(|P_{\mathrm{sim}}|) + |Q|.\log(|Q|))$. It is
therefore interesting to propose other algorithms with better compromises between time and space complexities. We will therefore compare in practice
different propositions.

Then, we will have all the prerequisites to address a more challenging problem
that we open here: the existence of a simulation algorithm with a time
complexity in $O(|\rightarrow|.\log(|Q|) +
|P_{\mathrm{sim}}|.|{\rightarrow_{\mathrm{sim}}}|)$, with
$\rightarrow_{\mathrm{sim}}$ the relation over $P_{\mathrm{sim}}$ induced by
$\rightarrow$, and a bit space complexity in
$O(|P_{\mathrm{sim}}|^2.\log(|P_{\mathrm{sim}}|)
+{|\rightarrow|.}\log(|Q|))$.
What is surprising is that the biggest challenge
is not the part in $O(|P_{\mathrm{sim}}|.|{\rightarrow_{\mathrm{sim}}}|)$\Olics{\newpage
\noindent} but
the part in $O(|\rightarrow|.\log(|Q|))$ in the time complexity.
Such an
algorithm will lead to an even greater improvement from the algorithm of the
present paper than that of the passage from HHK to RT since there are, in
general, many more transitions than states in a transition system.

\section*{Acknowledgements}

I thank the anonymous reviewers\OarXiv{ of the paper to LICS'17}. Their
questions and suggestions have helped to improve the presentation of the
paper. I am also grateful to Philippe Canalda for his helpful advice.



\newcommand{\etalchar}[1]{$^{#1}$}



\end{document}